\def\eqref#1{equation~\ref{#1}}
\def\1{\bm{1}}
\DeclareMathAlphabet{\mathsfit}{\encodingdefault}{\sfdefault}{m}{sl}
\SetMathAlphabet{\mathsfit}{bold}{\encodingdefault}{\sfdefault}{bx}{n}
\newcommand{\boldK}{{\boldsymbol{K}}}
\newcommand{\boldw}{{\boldsymbol{w}}}
\newcommand{\boldx}{{\boldsymbol{x}}}
\newcommand{\boldgamma}{{\boldsymbol{\gamma}}}
\newenvironment{sketch}{%
\proof}{\endproof}
\theoremstyle{plain}
\newtheorem{theorem}{Theorem}[section]
\newtheorem{proposition}[theorem]{Proposition}
\newtheorem{lemma}[theorem]{Lemma}
\newtheorem{corollary}[theorem]{Corollary}
\theoremstyle{definition}
\newtheorem{assumption}[theorem]{Assumption}
\theoremstyle{remark}
\tiny\color{gray},
\title{Solving the Cold Start Problem on One's Own as an End User via Preference Transfer}
\author{\name Ryoma Sato \email rsato@nii.ac.jp \\
      \addr National Institute of Informatics}
\begin{document}

\maketitle

\begin{abstract}
  We propose a new approach that enables end users to directly solve the cold start problem by themselves. The cold start problem is a common issue in recommender systems, and many methods have been proposed to address the problem on the service provider's side. However, when the service provider does not take action, users are left with poor recommendations and no means to improve their experience. We propose an algorithm, \textsc{Pretender}, that allows end users to proactively solve the cold start problem on their own. \textsc{Pretender} does not require any special support from the service provider and can be deployed independently by users. We formulate the problem as minimizing the distance between the source and target distributions and optimize item selection from the target service accordingly. Furthermore, we establish theoretical guarantees for \textsc{Pretender} based on a discrete quadrature problem. We conduct experiments on real-world datasets to demonstrate the effectiveness of \textsc{Pretender}.
\end{abstract}

\section{Introduction}

Recommender systems have become an essential component of many online services, such as e-commerce \cite{linden2003amazon, mcauley2013hidden}, social media \cite{weng2010twitterrank, chitra2020analyzing}, and video streaming platforms \cite{bell2007lessons,uribe2016netflix,steck2018calibrated}. These systems analyze user preferences and recommend items that the user may like. A common problem in recommender systems is the cold start problem, where the system cannot provide good recommendations for new users due to a lack of historical data. Many methods have been proposed to address the cold start problem, such as content-based filtering \cite{mooney2000content, schein2002methods, lam2008addressing}, utilizing side information \cite{zhao2016connecting, park2009pairwise, lin2013addressing}, meta learning \cite{vartak2017meta, lee2019melu, lu2020meta}, and active learning \cite{zhou2011functinoal, christakopoulou2016towards}.

However, all of these methods are designed to be implemented on the service provider's side, and require the service provider to take action. When the service provider does not take action, e.g., due to lack of resources, lack of incentives, or simple negligence, users are left with poor recommendations and no immediate way to improve their experience. This can be particularly frustrating for users who are eager to adopt a new service but struggle to discover relevant items due to the cold start problem.

In this paper, we propose a new problem setting that enables users to address the cold start problem on their own. We consider a scenario where a user has been using a source service (e.g., Netflix) for a long time and has a history of preferences for items. The user has just started using a target service (e.g., Hulu) but lacks any preference history on that platform. The user receives good recommendations from the source service but poor recommendations from the target service due to the cold start problem. The user wants to transfer the preferences from the source service to the target service so that the user can enjoy good recommendations from the target service as well. If the target service offers a built-in functionality to import preferences, this problem can be easily solved. However, in many cases, such functionality is not provided, leaving the user with no direct means to improve their recommendations. To address this, we propose an algorithm, \textsc{Pretender}, that enables users to overcome the cold start problem independently, even when neither the source nor the target service provides dedicated support for preference transfer.

We formulate the problem as minimizing the distance between the source and target distributions and optimize the selection of items from the target service accordingly. A key strength of our method is that it provides strong theoretical guarantees even when the way the target service uses the data is unknown. We prove this by solving a new quadrature problem that arises in the optimization process, which is of independent interest. We conduct experiments on real-world datasets to demonstrate the effectiveness of \textsc{Pretender}. Our experiments show that \textsc{Pretender} can transfer the preferences effectively.

The contributions of this paper are as follows:

\begin{itemize}
  \item We propose a new problem setting that enables users to independently address the cold start problem on their own.
  \item We propose \textsc{Pretender}, an algorithm that allows users to overcome the cold start problem with theoretical guarantees, even in the absence of support from the service provider.
  \item We conduct experiments on real-world datasets to validate the effectiveness of \textsc{Pretender}.
\end{itemize}

\begin{tcolorbox}[colframe=gray!20,colback=gray!20,sharp corners]
  \textbf{Reproducibility}: Our code is available at \url{https://github.com/joisino/pretender}.
\end{tcolorbox}

\section{Problem Setting} \label{sec: problem_setting}

Suppose we are an end user of the source service (e.g., Netflix) and the target service (e.g., Hulu). We have been using the source service for a long time and have a history of preferences for items (e.g., thumbs-up or thumbs-down for videos). We have just started using the target service and lack any preference history on that platform. We receive poor recommendations from the target service due to the cold start problem. Our goal is to transfer our preferences from the source service to the target service so that we can enjoy good recommendations from the target service as well. The problem is formally defined as follows:

\begin{tcolorbox}[colframe=gray!20,colback=gray!20,sharp corners]
  \textbf{Problem (Preference Transfer).} \\
  \textbf{Input}: Sets of source items $I_S$ and target items $I_T$. Features $\boldx_i$ of items $i \in I_S \cup I_T$. A set $\mathcal{D}_S = \{(i, y_i)\} \subset I_S \times \{0, 1\}$ of user preferences on source items. A positive integer $K \in \mathbb{Z}_+$ representing the number of target items to interact with. \\ 
  \textbf{Output}: A set  $\mathcal{D}_T = \{(i, y_i)\} \subset I_T \times \{0, 1\}$ with $|\mathcal{D}_T| = K$ such that clicking items following $\mathcal{D}_T$ results in good recommendations from the target service.
\end{tcolorbox}

A method outputs a set of preferences for the target items, denoted as $\mathcal{D}_T = \{(i, y_i)\}$, Following this output, the end user clicks thumbs up for each item $i \in \{i \in I_T \mid (i, 1) \in \mathcal{D}_T\}$ and thumbs down for each item $i \in \{i \in I_T \mid (i, 0) \in \mathcal{D}_T\}$. A good method should output $\mathcal{D}_T$ such that the user can enjoy good recommendations from the target service after this process. Note that the clicking process (and thus the entire process) may be automated by a Web agent.

This problem presents three main challenges:

\begin{itemize}
\item \textbf{Items are not shared between services.} A naive approach might simply ``copy'' the history to the target service, but this approach fails because the corresponding items may not exist in the target service, i.e., $I_S \neq I_T$.
\item \textbf{Clicking many items is tedious.} The user may not want to interact with many items just for preference transfer, i.e., $K$ is small. Even when the clicking process is automated, too large $K$ may take a long time and/or impose a heavy load on the target service.
\item \textbf{The target service's use of the data is unknown.} The target service may process preference data $\mathcal{D}_T$ differently from the source service or in an unexpected manner, making it difficult to predict how the transferred preferences will influence recommendations.
\end{itemize}

\section{Pretender}

We propose \textsc{Pretender} (\underbar{PRE}ference \underbar{T}ransfer by \underbar{END} us\underbar{ER}s) to solve the preference transfer problem. We first describe the general framework and then provide variants for specific settings.

\subsection{Formulating the Problem as Distance Minimization} \label{sec: formulation}

The goal of \textsc{Pretender} is to select items from the target service such that its empirical distribution $\mu_T^{\mathcal{D}_T}$ is close to the source distribution $\mu_S$, which are defined as \begin{align}
  \mu_T^{\mathcal{D}_T} = \frac{1}{K} \sum_{(i, y_i) \in \mathcal{D}_T} \delta_{(\boldx_i, y_i)}, \qquad  \mu_S = \frac{1}{|\mathcal{D}_S|} \sum_{(i, y_i) \in \mathcal{D}_S} \delta_{(\boldx_i, y_i)}
\end{align} where $\delta_x$ denotes the Dirac measure at $x$. We use the integral probability metric (IPM) \cite{muller1997integral,sriperumbudur2009note} to quantify the discrepancy between the distributions, which is defined for a function class $\mathcal{F}$ as \begin{align}
  \text{IPM}_\mathcal{F}(\mu, \nu) = \sup_{f \in \mathcal{F}} \int f \, d\mu - \int f \, d\nu. \label{eq: ipm}
\end{align} When $\mathcal{F}$ is the class of functions with a reproducing kernel hilbert space (RKHS) norm at most 1, the IPM is equivalent to the maximum mean discrepancy (MMD) \cite{gretton2012kernel}, and when $\mathcal{F}$ is the class of 1-Lipschitz functions, the IPM corresponds to the $1$-Wasserstein distance \cite{peyre2019computational, villani2009optimal}. \textsc{Pretender} selects items such that $\text{IPM}_\mathcal{F}(\mu_T, \mu_S)$ is small, ensuring that the target distribution closely aligns with the source distribution.

This formulation addresses the third challenge. Suppose the target service uses an unknown model $f_T(\cdot; \theta)$ and unknown loss function $\ell_T(\cdot, \cdot)$ to train the model. We only know that the loss $\ell_T(f_T(\boldx; \theta), y)$ is $L$-Lipschitz in $(\boldx, y)$. Then, the model incurs the following loss on the source preferences: \begin{align}
  (\text{loss on the source data}) &= \frac{1}{|\mathcal{D}_S|} \sum_{(i, y_i) \in \mathcal{D}_S} \ell_T(f_T(\boldx_i; \theta), y_i) \\
  &= \int \ell_T(f_T(\boldx; \theta), y) \, d\mu_S(\boldx, y) \\
  &\leq \int \ell_T(f_T(\boldx; \theta), y) \, d\mu_T^{\mathcal{D}_T}(\boldx, y) + L \cdot W_1(\mu_T^{\mathcal{D}_T}, \mu_S) \\
  &= (\text{training loss on the target data}) + L \cdot W_1(\mu_T^{\mathcal{D}_T}, \mu_S)
\end{align} where $W_1$ denotes the $1$-Wasserstein distance. The inequality follows from the definition of IPM (Eq. (\ref{eq: ipm})). Therefore, if we minimize the Wasserstein distance $W_1(\mu_T, \mu_S)$ between the source and target distributions, and the target service effectively minimizes the training loss on the target data, the model trained on the target preferences will also generalize well to the source preferences, thereby accurately reflecting the user's preferences

The crux of this approach is that its guarantee is agnostic to the model, loss function, and the training method the target service employs, which are typically not known to the user. Regardless of how the target service uses the data, the user can ensure that the recommendation model trained on the target preferences reflects the source preferences as long as the distributions are close.

The use of IPM also addresses the first challenge. IPM exploits the geometry of the data space through the smoothness of the function class $\mathcal{F}$ (e.g., Lipschitz continuity and a small RKHS norm) and can be used even when the items are not shared between the services. This is in a stark contrast to other discrepancy measures such as KL divergence and Hellinger distance.

Although we have formulated the problem as distance minimization between distributions, this problem remains challenging because selection of items is combinatorial. This is in contrast to other minimization problems, where optimization is performed over weights and/or coordinates of data points, making the problem continuous and often convex. However, end users cannot ``thumbs up 0.2 points'' or alter the features of the items in the service, meaning that we cannot sidestep the combinatorial nature of the problem. We address this challenge in the following.

\subsection{Optimization} \label{sec: optimization}

The optimization framework of \textsc{Pretender} is as follows:

\begin{enumerate}
  \item (\textbf{Item Preparation}) Prepare the set of labeled target items $J_T = \{(i, y) \mid i \in I_T, y \in \{0, 1\}\}$. We rearrange the items such that they are indexed by $1, 2, \ldots, 2m$ and redefine $J_T = \{(i_j, y_j) \mid j \in [2m]\}$, where $m = |I_T|$ is the number of items in the target service.
  \item (\textbf{Continuous Optimization}) Optimize the weights $\boldw \in \left[0, \frac{1}{K}\right]^{2m} \cap \Delta_{2m}$ of the items in the target service such that the weighted empirical distribution $\mu_T^\boldw = \sum_{j = 1}^{2m} \boldw_i \delta_{(\boldx_{i_j}, y_j)}$ is close to the source distribution $\mu_S$, where $\Delta_d$ denotes the $(d-1)$-dimensional probability simplex. This is achieved by solving the following optimization problem: \begin{align}
    \begin{split}
    &\min_{\boldw \in \mathbb{R}^{2m}} \mathcal{D}\left(\sum_{j = 1}^{2m} \boldw_i \delta_{(\boldx_{i_j}, y_j)}, \mu_S\right), \\
    &\quad \text{s.t.} \quad \sum_{j = 1}^{2m} \boldw_j = 1, \qquad 0 \le \boldw_j \le \frac{1}{K} ~(j = 1, 2, \ldots, 2m). \label{eq: continuous_optimization}
    \end{split}
  \end{align}
  \item (\textbf{Random Selection}) Sample items according to the optimized weights. For each $j \in [2m]$, sample $I_j \sim \text{Bernoulli}(K w_j)$ for $j \in [2m]$, and define the selected set as $\hat{\mathcal{D}}_T = \{(i_j, y_j) \mid I_j = 1\}$.
  \item (\textbf{Postprocessing}) If $|\hat{\mathcal{D}}_T| < K$, greedily insert additional items, and if $|\hat{\mathcal{D}}_T| > K$, greedily remove items to obtain the final preference set $\mathcal{D}_T$ satisfying $|\mathcal{D}_T| = K$.
\end{enumerate}

\textbf{Item Preparation}. We  construct the set of labeled target items as $J_T = \{(i_j, y_i) \mid j \in [2m]\}$. Since the user can choose the label for each item (e.g., thumbs up or thumbs down), we include both possible labels $(i, 0)$ and $(i, 1)$ in the candidate set, making the total number of items $2m$.

\textbf{Continuous Optimization}. We first solve the continuous relaxation of the problem. This problem is continuous and convex for MMD and the Wasserstein distance. Therefore, we can employ standard methods such as the Frank-Wolfe algorithm \cite{jaggi2013revisiting} or the projected subgradient descent algorithm \cite{boyd2003subgradient} to solve the problem. By standard results in convex optimization, we can obtain $\hat{\boldw}$ such that \begin{align}
  \mathcal{D}(\mu_T^{\hat{\boldw}}, \mu_S) &\leq \text{OPT}^{\text{continuous}} + \epsilon \\
  &= \min_{\boldw \in \left[0, \frac{1}{K}\right]^{2m} \cap \Delta_{2m}} \mathcal{D}\left(\mu_T^\boldw, \mu_S\right) + \epsilon \\
  &\stackrel{\text{(a)}}{\le} \min_{\mathcal{D}_T = \{(i, y_i)\}\colon |\mathcal{D}_T| = K} \mathcal{D}\left(\frac{1}{K} \sum_{(i, y_i) \in \mathcal{D}_T} \delta_{(\boldx_i, y_i)}, \mu_S\right) + \epsilon \\
  &= \text{OPT}^{\text{combinatorial}} + \epsilon,
\end{align} where (a) follows because the weight $(1/K, \ldots, 1/K)$ of the empirical measure is in the feasible set. Therefore, the continuous solution is at least as good as the combinatorial solution. However, we need to carefully round the solution to obtain the final output to ensure that the rounding process does not degrade the quality of the solution much. This is the main challenge from a theoretical perspective.

\textbf{Random Selection}. We employ a randomized approach. We first point out that the distribution Bernoulli$(K \boldw_j)$ is well-defined as we set the optimization domain to $\boldw \in \left[0, \frac{1}{K}\right]^{2m} \cap \Delta_{2m}$ and $0 \le K \boldw_j \le 1$ holds. Let $\tilde{\boldw} \in \{0, 1/K\}^{2m}$ be the sample weights after the random selection, i.e., $\tilde{\boldw}_j = \frac{1}{K} I_j$. Then, \begin{align}
  \mathbb{E}[\tilde{\boldw}_j] = \frac{1}{K} \mathbb{E}[I_j] \stackrel{\text{(a)}}{=} \frac{1}{K} K \boldw_j = \boldw_j,
\end{align} where (a) follows from $I_j \sim \text{Bernoulli}(K \boldw_j)$. Therefore, the expected weight is the same as the continuous solution. This implies that $\mathcal{D}(\mu_T^{\tilde{\boldw}}, \mu_S)$ distributes around that of the continuous solution $\mathcal{D}(\mu_T^{\hat{\boldw}}, \mu_S) \approx \text{OPT}^{\text{continuous}}$. Next, we analyze the number of selected items, which is given by, in expectation, \begin{align}
  \mathbb{E}\left[\sum_{j = 1}^{2m} I_j\right] = \sum_{j = 1}^{2m} K \boldw_j \stackrel{\text{(a)}}{=} K, \label{eq: exp_num_items}
\end{align} where (a) follows from the fact that $\sum_{j = 1}^{2m} \boldw_j = 1$, i.e., $\boldw \in \Delta_{2m}$. In addition, \begin{align}
  \text{Var}\left[\sum_{j = 1}^{2m} I_j\right] &\stackrel{\text{(a)}}{=} \sum_{j = 1}^{2m} \text{Var}[I_j]
  \stackrel{\text{(b)}}{=} \sum_{j = 1}^{2m} K \boldw_j (1 - K \boldw_j)
  \stackrel{\text{(c)}}{\le} \sum_{j = 1}^{2m} K \boldw_j
  \stackrel{\text{(d)}}{=} K, \label{eq: var_num_items}
\end{align} where (a) follows from the independence of the Bernoulli random variables, (b) follows from the variance of the Bernoulli random variables, (c) follows from $\boldw_j \ge 0$, and (d) follows from $\sum_{j = 1}^{2m} \boldw_j = 1$. Therefore, the standard deviation is the order of $\sqrt{K}$, and the number of selected items is concentrated around $[K - O(\sqrt{K}), K + O(\sqrt{K})]$. This implies that the selection process does not blow up the error too much. We will elaborate this discussion in the following sections.

\textbf{Postprocessing}. Since the number of selected items may deviate from $K$, we may need to insert or remove items to make it exactly $K$. As the number of over- or under-selected items is on the order of $\sqrt{K}$, the error introduced by this step is also the order of $\frac{\sqrt{K}}{K} = K^{-1/2}$.

In the following sections, we will elaborate on the method and provide theoretical guarantees for MMD and the Wasserstein distance.

\subsection{Pretender for MMD} \label{sec: mmd}

We consider the case where we quantify the discrepancy between the distributions with MMD, defined as \begin{align}
  \text{MMD}(\mu, \nu) = \sup_{\|f\|_\mathcal{H} \le 1} \int f \, d\mu - \int f \, d\nu,
\end{align} where $\mathcal{H}$ is the RKHS with the kernel $k$. Equivalently, the MMD can be expressed as \begin{align}
  \text{MMD}(\mu, \nu)^2 &= \left\| \int \phi(x) \, d\mu(x) - \int \phi(x) \, d\nu(x) \right\|^2_\mathcal{H} \\
  &= \mathbb{E}_{x, x' \sim \mu} [k(x, x')] - 2 \mathbb{E}_{x \sim \mu, x' \sim \nu} [k(x, x')] + \mathbb{E}_{x, x' \sim \nu} [k(x, x')], \label{eq: mmd-def}
\end{align} where $\phi(x) = k(x, \cdot)$ is the feature map. Now, consider the empirical distributions \begin{align}
  \mu^\boldw = \sum_{j = 1}^{2m} \boldw_j \delta_{x_j}, \qquad
  \nu = \frac{1}{n} \sum_{j = 1}^n \delta_{x'_j}, \label{eq: empirical}
\end{align} where $x_j = (\boldx_{i_j}, y_j)$ is concatenation of the feature and label. Substituting these into Eq. (\ref{eq: mmd-def}), we obtain \begin{align}
  \text{MMD}(\mu^\boldw, \nu)^2 &= \sum_{j, j' = 1}^{2m} \boldw_j \boldw_{j'} k(x_j, x_{j'}) - 2 \sum_{j = 1}^{2m} \sum_{j' = 1}^n \boldw_j \frac{1}{n} k(x_j, x'_{j'}) + \frac{1}{n^2} \sum_{j, j' = 1}^n k(x'_j, x'_{j'}) \\
  &= \boldw^\top \boldK^{TT} \boldw - \frac{2}{n} \bold1^\top \boldK^{ST} \boldw + \frac{1}{n^2} \bold1^\top \boldK^{SS} \bold1 \\
  &= \boldw^\top \boldK^{TT} \boldw - \frac{2}{n} \bold1^\top \boldK^{ST} \boldw + \text{const.},
\end{align} where $\boldK^{TT} \in \mathbb{R}^{2m \times 2m}$, $\boldK^{ST} \in \mathbb{R}^{2m \times n}$, and $\boldK^{SS} \in \mathbb{R}^{n \times n}$ are the kernel matrices with $\boldK^{TT}_{jj'} = k(x_j, x_{j'})$, $\boldK^{ST}_{jj'} = k(x_j, x'_{j'})$, and $\boldK^{SS}_{jj'} = k(x'_j, x'_{j'})$. These matrices are positive semi-definite, and MMD$^2$ is convex in $\boldw$. Thus, the optimization problem to minimize MMD reduces to the following convex quadratic program: \begin{align}
  \begin{split}
  &\min_{\boldw \in \mathbb{R}^{2m}} \boldw^\top \boldK^{TT} \boldw - \frac{2}{n} \bold1^\top \boldK^{ST} \boldw, \\
  &\quad \text{s.t.} \quad  \sum_{j = 1}^{2m} \boldw_j = 1, \qquad 0 \le \boldw_j \le \frac{1}{K} ~(j = 1, 2, \ldots, 2m).
  \end{split}
\end{align} This problem can be solved with the Frank-Wolfe algorithm. We initialize $\boldw = \frac{\bold1}{2m}$, which is feasible. We then iteratively update $\boldw$ by the Frank-Wolfe algorithm with step size $\frac{2}{t + 2}$ for $t = 0, 1, \ldots, L - 1$. As the objective function is convex quadratic, we obtain the following guarantee.
\begin{proposition} \label{prop: mmd_optimization}
  When we run the Frank-Wolfe algorithm with the step size $\frac{2}{t + 2}$ for $t = 0, 1, \ldots, L - 1$, we obtain $\hat{\boldw}$ such that \begin{align}
    \text{MMD}(\mu_T^{\hat{\boldw}}, \mu_S) &\leq \text{OPT}^{\text{continuous}} + C L^{-1/2},
  \end{align} for some constant $C \in \mathbb{R}_+$.
\end{proposition}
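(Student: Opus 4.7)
The plan is to instantiate the classical convergence guarantee for the Frank-Wolfe algorithm on convex smooth objectives and then pass from the squared MMD to MMD via subadditivity of the square root. The objective $f(\boldw) = \boldw^\top \boldK^{TT} \boldw - \frac{2}{n} \bold1^\top \boldK^{ST} \boldw$ (equal to $\text{MMD}^2(\mu_T^\boldw, \mu_S)$ up to an additive constant in $\boldw$) is a convex quadratic, and the feasible set $\mathcal{W} = \{\boldw \in \mathbb{R}^{2m} : \sum_j \boldw_j = 1,\; 0 \le \boldw_j \le 1/K\}$ is a compact convex polytope, so Frank-Wolfe is applicable and each linear subproblem reduces to selecting the $K$ smallest gradient coordinates with weight $1/K$.

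\textbf{Step 1: standard FW rate.} First I would invoke the textbook rate \citep{jaggi2013revisiting}: with step size $\frac{2}{t+2}$, the iterate $\hat{\boldw}$ after $L$ iterations satisfies
\begin{equation*}
  f(\hat{\boldw}) - \min_{\boldw \in \mathcal{W}} f(\boldw) \;\le\; \frac{2 C_f}{L + 2},
\end{equation*}
where $C_f$ is the curvature constant of $f$ over $\mathcal{W}$. Since $f$ is quadratic with Hessian $2 \boldK^{TT}$, $C_f$ is bounded by $\|\boldK^{TT}\|_{\mathrm{op}} \cdot \mathrm{diam}(\mathcal{W})^2$, and both factors are finite: the operator norm is controlled by a uniform bound on $k(x,x')$, and the diameter of $\mathcal{W}$ is at most $\sqrt{2}$ since $\mathcal{W} \subset \Delta_{2m}$. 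This gives $C_f \le C_1$ for some absolute constant $C_1$ depending only on the kernel.

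\textbf{Step 2: from $\mathrm{MMD}^2$ to $\mathrm{MMD}$.} The continuous minimum of $f$ equals $(\mathrm{OPT}^{\text{continuous}})^2$ minus the same constant, so the FW guarantee reads
\begin{equation*}
  \mathrm{MMD}^2(\mu_T^{\hat{\boldw}}, \mu_S) \;\le\; (\mathrm{OPT}^{\text{continuous}})^2 + \frac{2 C_1}{L + 2}.
\end{equation*}
Then I would apply $\sqrt{a + b} \le \sqrt{a} + \sqrt{b}$ for $a, b \ge 0$ to conclude
\begin{equation*}
  \mathrm{MMD}(\mu_T^{\hat{\boldw}}, \mu_S) \;\le\; \mathrm{OPT}^{\text{continuous}} + \sqrt{\tfrac{2 C_1}{L+2}} \;\le\; \mathrm{OPT}^{\text{continuous}} + C L^{-1/2}
\end{equation*}
for a suitable constant $C$.

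\textbf{Expected difficulties.} Nothing in this argument is deep; the two places that need care are (i) verifying the curvature constant is finite uniformly in $m$ and $K$, which requires only mild boundedness of the kernel and the observation that $\mathcal{W}$ lies inside the unit simplex, and (ii) the $\sqrt{\cdot}$ conversion, which turns the $O(L^{-1})$ rate on the squared objective into the claimed $O(L^{-1/2})$ rate on the MMD itself. The initialization $\boldw = \bold1/(2m)$ must also be checked to lie in $\mathcal{W}$, which holds as soon as $K \le 2m$, a trivial requirement for the problem to be meaningful.
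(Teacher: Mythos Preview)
Your proposal is correct and follows essentially the same approach as the paper: the paper's proof simply notes that the quadratic objective is smooth, invokes the $O(L^{-1})$ Frank-Wolfe rate from \cite{jaggi2013revisiting}, and then takes a square root to obtain $O(L^{-1/2})$ on the MMD. Your write-up is in fact more detailed than the paper's (you make explicit the curvature-constant bound and the $\sqrt{a+b}\le\sqrt{a}+\sqrt{b}$ step), but the underlying argument is identical.
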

\begin{proof}
  Since the objective function is quadratic, it is $\beta$-smooth. The Frank-Wolfe algorithm with a step size $\frac{2}{t + 2}$ achieves a convergence rate of $O(L^{-1})$ \cite{jaggi2013revisiting}. Given that the objective function is the squared MMD, the error in MMD is on the order of $O(L^{-1/2})$.
\end{proof}
As discussed in Section \ref{sec: optimization}, we then round the solution by selecting items with probability $K w_j$. To analyze this step, we make the following mild assumption.
\begin{assumption}[\textbf{Bounded Kernel}] \label{assumption: bounded_kernel}
  The kernel $k$ is bounded by $B$ in the sense that $k(x, x) = \langle \phi(x), \phi(x) \rangle = \|\phi(x)\|_\mathcal{H}^2 \le B$ for all $x \in \mathcal{X}$.
\end{assumption}
This assumption is mild and holds for many kernels such as the Gaussian, Laplacian, and Mat\'{e}rn kernels. We have the following guarantee for the random selection step.
\begin{proposition} \label{prop: mmd_random_selection}
  Let $\tilde{\boldw}_j = \frac{I_j}{K}$ with $I_j \sim \text{Bernoulli}(K \hat{\boldw}_j)$. Then, \begin{align}
    \left\|\sum_{j = 1}^{2m} \tilde{\boldw}_j \phi(x_j) - \sum_{j = 1}^{2m} \hat{\boldw}_j \phi(x_j)\right\|_\mathcal{H} &\leq \sqrt{\frac{B}{\delta K}}
  \end{align} with probability at least $1 - \delta$.
\end{proposition}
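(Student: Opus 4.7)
The plan is to bound the RKHS norm of a sum of independent, mean-zero random vectors using a second-moment computation followed by Markov's inequality. Define $Z_j = (\tilde{\boldw}_j - \hat{\boldw}_j)\phi(x_j) = \frac{1}{K}(I_j - K\hat{\boldw}_j)\phi(x_j) \in \mathcal{H}$ for $j \in [2m]$. Since $I_j \sim \text{Bernoulli}(K\hat{\boldw}_j)$, each $Z_j$ has mean zero, and since the $I_j$ are drawn independently across $j$, the $Z_j$ are independent elements of $\mathcal{H}$. The quantity of interest is simply $\|\sum_{j=1}^{2m} Z_j\|_\mathcal{H}$.

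First I would compute the second moment $\mathbb{E}\|\sum_j Z_j\|_\mathcal{H}^2$ by expanding the inner product. Independence and zero mean make all cross terms $\mathbb{E}\langle Z_j, Z_{j'}\rangle_\mathcal{H}$ vanish for $j \neq j'$, leaving
\begin{align}
\mathbb{E}\left\|\sum_{j=1}^{2m} Z_j\right\|_\mathcal{H}^2 = \sum_{j=1}^{2m} \mathbb{E}\|Z_j\|_\mathcal{H}^2 = \sum_{j=1}^{2m} \frac{\text{Var}(I_j)}{K^2} \|\phi(x_j)\|_\mathcal{H}^2.
\end{align}
Using $\text{Var}(I_j) = K\hat{\boldw}_j(1 - K\hat{\boldw}_j) \le K\hat{\boldw}_j$ together with Assumption \ref{assumption: bounded_kernel} to get $\|\phi(x_j)\|_\mathcal{H}^2 = k(x_j,x_j) \le B$, and finally $\sum_j \hat{\boldw}_j = 1$ from feasibility, I would obtain
\begin{align}
\mathbb{E}\left\|\sum_{j=1}^{2m} Z_j\right\|_\mathcal{H}^2 \le \sum_{j=1}^{2m} \frac{\hat{\boldw}_j B}{K} = \frac{B}{K}.
\end{align}

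The remaining step is to convert this second-moment bound into a high-probability bound. Applying Markov's inequality to the nonnegative random variable $\|\sum_j Z_j\|_\mathcal{H}^2$ with threshold $t = B/(\delta K)$ gives
\begin{align}
\Pr\!\left(\left\|\sum_{j=1}^{2m} Z_j\right\|_\mathcal{H}^2 \ge \frac{B}{\delta K}\right) \le \frac{\mathbb{E}\|\sum_j Z_j\|_\mathcal{H}^2}{B/(\delta K)} \le \delta,
\end{align}
and taking square roots on the complementary event yields the claimed bound $\sqrt{B/(\delta K)}$ with probability at least $1-\delta$.

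There is no real obstacle here: the only subtlety is recognizing that the calculation is cleanest in the RKHS rather than in MMD itself, so that independence kills the cross terms exactly. One could sharpen the $1/\delta$ dependence to $\log(1/\delta)$ by applying a Hilbert-space Bernstein or Hoeffding inequality in place of Markov, but Markov suffices for the stated bound and keeps the constants explicit.
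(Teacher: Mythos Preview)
Your proposal is correct and follows essentially the same approach as the paper: both compute the expected squared RKHS norm, exploit independence and mean-zero-ness to reduce to the diagonal terms, bound those using $\text{Var}(I_j)\le K\hat{\boldw}_j$, $\|\phi(x_j)\|_\mathcal{H}^2\le B$, and $\sum_j\hat{\boldw}_j=1$, and finish with Markov's inequality. Your presentation is somewhat more streamlined than the paper's (you identify the $Z_j$ as mean-zero independent vectors upfront rather than expanding all cross terms explicitly), but the argument is the same.
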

\begin{sketch}
  We evaluate $\mathbb{E}_I\left[\left\|\sum_j \tilde{\boldw}_j \phi(x_j) - \sum_j \hat{\boldw}_j \phi(x_j)\right\|_\mathcal{H}^2\right]$. This can be bounded by $\frac{B}{K}$ by a similar argument to Eq. \ref{eq: var_num_items}. Then we obtain the desired result by Markov's inequality. The full proof is provided in Appendix \ref{sec: proof_mmd_random_selection}.
\end{sketch}
We then analyze the postprocessing step. We have the following guarantee.
\begin{proposition} \label{prop: mmd_postprocessing}
  Let $\tilde{\boldw}'_j = \frac{1}{K}$ if the $j$-th item is in the final output $\mathcal{D}_T$ and $0$ otherwise. We have \begin{align}
    \left\|\sum_{j = 1}^{2m} \tilde{\boldw}'_j \phi(x_j) - \sum_{j = 1}^{2m} \tilde{\boldw}_j \phi(x_j)\right\|_\mathcal{H} \le \sqrt{\frac{B}{\delta K}}
  \end{align} with probability at least $1 - \delta$.
\end{proposition}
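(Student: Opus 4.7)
The plan is to reduce the analysis of the postprocessing step to a concentration bound on the cardinality $N = \sum_{j=1}^{2m} I_j$ of the randomly selected set. I would exploit the fact that both $\tilde{\boldw}$ and $\tilde{\boldw}'$ take values only in $\{0, 1/K\}$: their difference is nonzero only on the coordinates that were inserted or removed during postprocessing, and each such coordinate contributes exactly $\pm 1/K$.

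First I would argue that $\tilde{\boldw}' - \tilde{\boldw}$ has at most $|N - K|$ nonzero entries. If $N \ge K$ postprocessing removes $N - K$ items, and if $N < K$ it inserts $K - N$ items; in either case every coordinate that stayed at $1/K$ (or at $0$) cancels in the difference. Combined with $\|\phi(x_j)\|_\mathcal{H}^2 = k(x_j, x_j) \le B$ from Assumption \ref{assumption: bounded_kernel} and the triangle inequality, this yields the deterministic bound
\begin{align*}
\left\|\sum_{j=1}^{2m} (\tilde{\boldw}'_j - \tilde{\boldw}_j)\, \phi(x_j)\right\|_\mathcal{H} \le \frac{|N - K|}{K}\sqrt{B}.
\end{align*}

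Second I would control $|N - K|$ using the first two moments already established in the paper: Eq. (\ref{eq: exp_num_items}) gives $\mathbb{E}[N] = K$ and Eq. (\ref{eq: var_num_items}) gives $\mathrm{Var}[N] \le K$. Applying Chebyshev's inequality yields $\Pr\bigl[|N - K| \ge \sqrt{K/\delta}\bigr] \le \delta$. Substituting this high-probability bound into the deterministic inequality above produces exactly $\sqrt{B/(\delta K)}$, as required.

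The one place that requires care is that the postprocessing rule is specified only operationally as \emph{greedy} insertion/removal, so one might worry about whether the greedy choices interact badly with the randomness of the $I_j$'s. However, the triangle-inequality bound depends only on the number of changed coordinates $|N - K|$, not on which ones are changed; the greedy rule can only tighten the bound relative to an arbitrary choice, so no further analysis of the greedy selection is required. The argument is structurally parallel to the sketch of Proposition \ref{prop: mmd_random_selection}, with Chebyshev applied to the scalar $N$ replacing Markov applied to a squared RKHS norm.
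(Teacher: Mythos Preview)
Your proposal is correct and follows essentially the same route as the paper: bound the RKHS norm by $\frac{\sqrt{B}}{K}\,|N-K|$ via the triangle inequality and Assumption~\ref{assumption: bounded_kernel}, then apply Chebyshev's inequality using $\mathbb{E}[N]=K$ and $\mathrm{Var}[N]\le K$ from Eqs.~(\ref{eq: exp_num_items}) and (\ref{eq: var_num_items}). Your remark that the bound depends only on the number of changed coordinates (not which ones the greedy rule picks) makes explicit a point the paper uses implicitly.
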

\begin{sketch}
  This also follows from a similar argument to Proposition \ref{prop: mmd_random_selection}. The proof is in Appendix \ref{sec: proof_mmd_postprocessing}.
\end{sketch}
By combining the above guarantees, we have the following guarantee for the final output.
\begin{theorem} \label{thm: mmd_sigle_shot}
  Under assumption \ref{assumption: bounded_kernel}, when we run the Frank-Wolfe algorithm with the step size $\frac{2}{t + 2}$ for $t = 0, 1, \ldots, L - 1$ and select items with the probability $K \boldw_j$, we obtain $\mathcal{D}_T$ such that \begin{align}
    \text{MMD}(\mu_T^{\mathcal{D}_T}, \mu_S) \le \text{OPT}^{\text{combinatorial}} + C L^{-1/2} + 2\sqrt{\frac{B}{\delta K}}
  \end{align} with probability at least $1 - 2\delta$.
\end{theorem}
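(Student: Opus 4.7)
The plan is to combine Propositions \ref{prop: mmd_optimization}, \ref{prop: mmd_random_selection}, and \ref{prop: mmd_postprocessing} by a triangle-inequality argument in the RKHS. The key fact I would exploit is that MMD can be written as a norm distance on the mean embeddings, $\text{MMD}(\mu, \nu) = \|\int \phi \, d\mu - \int \phi \, d\nu\|_\mathcal{H}$, so it inherits the triangle inequality from $\|\cdot\|_\mathcal{H}$.

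Let $\hat{\boldw}$ denote the Frank-Wolfe output after $L$ iterations, $\tilde{\boldw}$ with $\tilde{\boldw}_j = I_j/K$ the weights after random selection, and $\tilde{\boldw}'$ the weights corresponding to the final set $\mathcal{D}_T$ after postprocessing. Writing $\text{MMD}(\mu_T^{\mathcal{D}_T}, \mu_S) = \|\sum_{j=1}^{2m} \tilde{\boldw}'_j \phi(x_j) - \int \phi \, d\mu_S\|_\mathcal{H}$ and inserting $\sum_j \tilde{\boldw}_j \phi(x_j)$ and $\sum_j \hat{\boldw}_j \phi(x_j)$ as intermediate points, the triangle inequality yields three terms: a postprocessing term bounded by Proposition \ref{prop: mmd_postprocessing}, a random-selection term bounded by Proposition \ref{prop: mmd_random_selection}, and an optimization term $\text{MMD}(\mu_T^{\hat{\boldw}}, \mu_S)$ bounded by Proposition \ref{prop: mmd_optimization}.

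To convert the resulting $\text{OPT}^{\text{continuous}}$ into $\text{OPT}^{\text{combinatorial}}$, I would invoke step (a) from Section \ref{sec: optimization}: for any feasible combinatorial selection $\mathcal{D}_T$ with $|\mathcal{D}_T| = K$, the corresponding indicator weights $(1/K, \ldots, 1/K)$ on the chosen coordinates lie inside $[0, 1/K]^{2m} \cap \Delta_{2m}$, so $\text{OPT}^{\text{continuous}} \le \text{OPT}^{\text{combinatorial}}$. Then a union bound over the two independent high-probability events (random selection and postprocessing) gives the overall failure probability at most $2\delta$, producing the stated bound $\text{OPT}^{\text{combinatorial}} + C L^{-1/2} + 2\sqrt{B/(\delta K)}$.

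The main obstacle is essentially bookkeeping: making sure the postprocessing randomness (conditional on $\tilde{\boldw}$) and the selection randomness are cleanly combined so that the union bound gives $1 - 2\delta$ rather than something weaker. Since Propositions \ref{prop: mmd_random_selection} and \ref{prop: mmd_postprocessing} are already stated as clean high-probability bounds of the same form, the argument reduces to a routine triangle inequality plus union bound, with no further technical difficulty expected.
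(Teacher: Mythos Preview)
Your proposal is correct and essentially identical to the paper's own proof: the paper writes $\text{MMD}(\mu_T^{\mathcal{D}_T}, \mu_S)$ as an RKHS norm, applies the triangle inequality through the two intermediate embeddings $\sum_j \tilde{\boldw}_j \phi(x_j)$ and $\sum_j \hat{\boldw}_j \phi(x_j)$, invokes Propositions~\ref{prop: mmd_optimization}, \ref{prop: mmd_random_selection}, \ref{prop: mmd_postprocessing}, and finishes with a union bound. One small wording caveat: the two high-probability events from Propositions~\ref{prop: mmd_random_selection} and \ref{prop: mmd_postprocessing} are not independent (both are functions of the same Bernoulli draws $I_j$), but this is harmless since the union bound you actually use does not require independence.
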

\begin{sketch}
  This follows from the triangle inequality and union bound. The proof is in Appendix \ref{sec: proof_mmd_single_shot}.
\end{sketch}
\begin{corollary} \label{cor: mmd_final}
  Under assumption \ref{assumption: bounded_kernel}, for any $\epsilon > 0$ and $K \in \mathbb{Z}_+$, there exists $L$ and $R$ such that by running the Frank-Wolfe algorithm with $L$ iterations and repeating the rounding process $R$ times and choose the best $\mathcal{D}^*_T$ with smallest $\text{MMD}(\mu_T^{\mathcal{D}^*_T}, \mu_S)$, we have \begin{align}
    \text{MMD}(\mu_T^{\mathcal{D}^*_T}, \mu_S) &\le \text{OPT}^{\text{combinatorial}} + 2\sqrt{\frac{2B}{K}} + \epsilon \\ &= \text{OPT}^{\text{combinatorial}} + O(K^{-1/2})
  \end{align} with high probability.
\end{corollary}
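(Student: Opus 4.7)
The plan is to apply Proposition~\ref{prop: mmd_optimization} to control the Frank-Wolfe error deterministically, and then use Markov's inequality together with independent repetitions to control the rounding error. Specifically, I would first fix $L \ge (C/\epsilon)^2$ so that Proposition~\ref{prop: mmd_optimization} guarantees $\text{MMD}(\mu_T^{\hat{\boldw}}, \mu_S) \le \text{OPT}^{\text{continuous}} + \epsilon \le \text{OPT}^{\text{combinatorial}} + \epsilon$ for the single iterate $\hat{\boldw}$ that is then reused across all $R$ rounding trials.

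For a single rounding trial, I would not invoke Theorem~\ref{thm: mmd_sigle_shot} directly as a black box (setting $\delta=1/2$ there would make its probability vacuous), but instead redo the combined analysis. Let $Z = \sum_{j=1}^{2m}(\tilde{\boldw}'_j - \hat{\boldw}_j)\phi(x_j) \in \mathcal{H}$ and decompose $Z = U + V$, where $U$ is the random-selection error of Proposition~\ref{prop: mmd_random_selection} and $V$ is the postprocessing error of Proposition~\ref{prop: mmd_postprocessing}; the proofs of those propositions actually establish $\mathbb{E}[\|U\|^2_\mathcal{H}] \le B/K$ and $\mathbb{E}[\|V\|^2_\mathcal{H}] \le B/K$ before invoking Markov. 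Combining them via the Hilbert-space triangle inequality and the elementary bound $(a+b)^2 \le 2(a^2+b^2)$ gives
\begin{align*}
\mathbb{E}\bigl[\|Z\|^2_\mathcal{H}\bigr] \le \mathbb{E}\bigl[(\|U\|_\mathcal{H} + \|V\|_\mathcal{H})^2\bigr] \le 2\bigl(\mathbb{E}[\|U\|^2_\mathcal{H}] + \mathbb{E}[\|V\|^2_\mathcal{H}]\bigr) \le 4B/K,
\end{align*}
so Markov at threshold $8B/K$ yields $\|Z\|_\mathcal{H} \le 2\sqrt{2B/K}$ with probability at least $1/2$ in a single trial.

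I would then amplify by repetition. Across $R$ trials with independent Bernoulli draws, the probability that every trial fails is at most $2^{-R}$, so choosing $R = \lceil \log_2(1/\delta') \rceil$ for any desired failure tolerance $\delta'$ makes it at most $\delta'$. Since $\mathcal{D}^*_T$ is defined as the minimizer of the (empirically computable) $\text{MMD}(\mu_T^{\mathcal{D}_T}, \mu_S)$ over the $R$ trials, it inherits the rounding error of the best successful trial. Applying the triangle inequality for the mean-embedding norm,
\begin{align*}
\text{MMD}(\mu_T^{\mathcal{D}^*_T}, \mu_S) \le \text{MMD}(\mu_T^{\hat{\boldw}}, \mu_S) + \|Z\|_\mathcal{H} \le \text{OPT}^{\text{combinatorial}} + \epsilon + 2\sqrt{2B/K},
\end{align*}
with probability at least $1 - \delta'$, giving the stated bound.

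The main obstacle is obtaining the sharp constant $2\sqrt{2B/K}$ rather than the weaker $4\sqrt{B/K}$ that a black-box application of Theorem~\ref{thm: mmd_sigle_shot} (e.g., with $\delta=1/4$) would yield: one must combine the two rounding errors inside the second moment via $(a+b)^2 \le 2(a^2+b^2)$ and apply Markov once to the aggregate, rather than controlling $\|U\|_\mathcal{H}$ and $\|V\|_\mathcal{H}$ separately in probability and union-bounding. The amplification step itself is routine, relying only on the independence of the Bernoulli draws across the $R$ trials and on the fact that MMD is evaluable by the end user, so the ``choose the best $\mathcal{D}^*_T$'' step is actually implementable.
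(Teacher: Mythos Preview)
Your argument is correct, but it takes a different route from the paper's own proof. The paper applies Theorem~\ref{thm: mmd_sigle_shot} as a black box with a carefully tuned $\delta = \bigl(\sqrt{2} + \tfrac{\epsilon\sqrt{K}}{4\sqrt{B}}\bigr)^{-2}$, so that the rounding term $2\sqrt{B/(\delta K)}$ splits exactly as $2\sqrt{2B/K} + \epsilon/2$, with the remaining $\epsilon/2$ absorbed by setting $L = 4C^2\epsilon^{-2}$; since this $\delta$ is strictly below $1/2$, each trial succeeds with positive probability and repetition finishes the job. You instead re-open the proofs of Propositions~\ref{prop: mmd_random_selection} and~\ref{prop: mmd_postprocessing} to extract the second-moment bounds $\mathbb{E}\|U\|_\mathcal{H}^2,\mathbb{E}\|V\|_\mathcal{H}^2 \le B/K$, aggregate them through $(a+b)^2\le 2(a^2+b^2)$, and apply Markov once to the sum, giving the same constant $2\sqrt{2B/K}$ at success probability $\ge 1/2$ without any $\epsilon$-dependent tuning of $\delta$. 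The paper's version is more modular (it chains the already-stated theorem), whereas yours makes the origin of the constant $2\sqrt{2}$ transparent and yields an explicit $R = \lceil \log_2(1/\delta')\rceil$; both are valid and neither dominates the other.
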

\begin{proof}
  We set $L = 4 C^2 \varepsilon^{-2}$ and $\delta = \left(\frac{1}{\sqrt{2} + \frac{\varepsilon \sqrt{K}}{4\sqrt{B}}}\right)^2$, then we have \begin{align}
    \text{OPT}^{\text{combinatorial}} + C L^{-1/2} + 2\sqrt{\frac{B}{\delta K}} &\le \text{OPT}^{\text{combinatorial}} + 2\sqrt{\frac{2B}{K}} + \epsilon
  \end{align} with probability at least $1 - 2\delta > 0$ because $\delta < \frac{1}{2}$. Repeating the rounding process $R = \Theta((1 - 2\delta)^{-1})$ times, we can obtain the final output with the desired guarantee.
\end{proof}
Note that the rounding process can be de-randomized by dynamic programming and the conditional probability method.

We can also bound $\text{OPT}^{\text{combinatorial}}$ under the following assumption.
\begin{assumption}[\textbf{Bounded Density Ratio}] \label{assumption: bounded_density_ratio}
  The source items $\mathcal{D}_S$ and target candidate items $J_T$ are sampeled from distributions $P$ and $Q$, respectively, and the density ratio $r^* = \sup_x \frac{P(x)}{Q(x)}$ is bounded.
\end{assumption}
The items $\mathcal{D}_S$ and $J_T$ are not neccearily common (i.e., the first challenge), and the distributions are not necessarily identical. Instead, we assume that the target service has a chance to provide items similar to the source items with this assumption. Without this, it would be unlikely to find relevant items in the target service, making it impossible to transfer user preferences with a vanishing error. Under this assumption, we establish the following guarantee.
\begin{theorem} \label{thm: mmd_opt_combinatorial}
  Under Assumptions \ref{assumption: bounded_kernel} and \ref{assumption: bounded_density_ratio}, there exists $C_1, C_2 \in \mathbb{R}_+$ such that when $K \le C_1 \frac{|J_T|}{r^*}$, \begin{align}
    \text{MMD}(P, \mu^{\mathcal{D}_T}) \le C_2 \left(2 \sqrt{\frac{B}{|\mathcal{D}_S|}} + \sqrt{\frac{B}{K}}\right) \quad \text{and} \quad \text{OPT}^{\text{combinatorial}} \le C_2 \left(\sqrt{\frac{B}{|\mathcal{D}_S|}} + \sqrt{\frac{B}{K}}\right)
  \end{align} with high probability.
\end{theorem}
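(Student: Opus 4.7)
The plan is to upper-bound $\text{OPT}^{\text{combinatorial}}$ by exhibiting a specific feasible subset $\mathcal{D}_T^{\dagger} \subseteq J_T$ of size $K$ whose empirical measure approximates $\mu_S$ in MMD. I construct $\mathcal{D}_T^{\dagger}$ by rejection sampling driven by the bounded density ratio, which lets me turn the iid draws from $Q$ into iid draws from $P$; once the candidate is in hand, two triangle inequalities in the RKHS deliver both claimed bounds.

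The first ingredient is the standard concentration of kernel mean embeddings. Under Assumption \ref{assumption: bounded_kernel}, McDiarmid's inequality applied to $\|\hat{\mu}-\mu\|_{\mathcal{H}}$ (with bounded differences of order $\sqrt{B}/n$ and expectation of order $\sqrt{B/n}$) yields, with high probability,
\begin{align}
\text{MMD}(P, \mu_S) \le C\sqrt{B/|\mathcal{D}_S|}.
\end{align}
The second ingredient is the rejection sampling construction. For each $x \in J_T$, independently accept $x$ with probability $\frac{P(x)}{r^{*} Q(x)} \in [0,1]$; conditional on being accepted, the retained items are iid from $P$. Their expected number is $|J_T|/r^{*}$, so a multiplicative Chernoff bound shows that with high probability at least $|J_T|/(2r^{*})$ items are accepted. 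Choosing $C_1 \le 1/2$ in the hypothesis $K \le C_1 |J_T|/r^{*}$ guarantees at least $K$ acceptances, and I let $\mathcal{D}_T^{\dagger}$ be the first $K$ of them. Applying the same concentration bound to these $K$ conditionally iid samples from $P$ gives $\text{MMD}(P, \mu^{\mathcal{D}_T^{\dagger}}) \le C\sqrt{B/K}$ with high probability.

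Assembly is then pure triangle inequality. Since $\mathcal{D}_T^{\dagger}$ is feasible,
\begin{align}
\text{OPT}^{\text{combinatorial}} \le \text{MMD}(\mu_S, \mu^{\mathcal{D}_T^{\dagger}}) \le \text{MMD}(\mu_S, P) + \text{MMD}(P, \mu^{\mathcal{D}_T^{\dagger}}) \le C\bigl(\sqrt{B/|\mathcal{D}_S|} + \sqrt{B/K}\bigr),
\end{align}
which is the second claim. For the first claim, letting $\mathcal{D}_T^{\star}$ denote the combinatorial minimizer,
\begin{align}
\text{MMD}(P, \mu^{\mathcal{D}_T^{\star}}) \le \text{MMD}(P, \mu_S) + \text{OPT}^{\text{combinatorial}} \le C\bigl(2\sqrt{B/|\mathcal{D}_S|} + \sqrt{B/K}\bigr),
\end{align}
after absorbing constants into $C_2$.

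\textbf{Main obstacle.} The substantive step is the rejection sampling and its bookkeeping: the retained items are iid from $P$ only conditional on the acceptance pattern, so the MMD concentration for $\mathcal{D}_T^{\dagger}$ must be invoked on the event that at least $K$ items are accepted. The hypothesis $K \le C_1 |J_T|/r^{*}$ is exactly the budget required for the Chernoff step to hold; everything else is concentration of empirical kernel means and a union bound over the (constant number of) high-probability events. The bounded density ratio of Assumption \ref{assumption: bounded_density_ratio} is used solely to ensure the acceptance probability lies in $[0,1]$, which is what makes the rejection sampling well-defined.
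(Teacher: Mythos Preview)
Your proof is correct and follows essentially the same route as the paper: both exhibit a feasible size-$K$ subset of $J_T$ consisting of iid draws from $P$, then combine two empirical-MMD concentration bounds via the triangle inequality. The paper phrases the construction as a mixture decomposition $Q = \tfrac{1}{r^*}P + (1-\tfrac{1}{r^*})Q'$ with latent Bernoulli indicators $Z_i$, which is exactly your rejection-sampling step written in coupling form; your use of Chernoff for the acceptance count is tighter than the paper's appeal to Markov, and you also make explicit the second triangle inequality that yields the first claim with the factor $2$ on the $|\mathcal{D}_S|$ term, which the paper's appendix leaves implicit.
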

\begin{sketch}
  Under Assumption \ref{assumption: bounded_density_ratio}, $\frac{1}{r^*} P \le Q$ and $Q$ can be written as $\frac{1}{r^*} P + (1 - \frac{1}{r^*}) Q'$ for some distribution $P'$. We essentially have $\frac{|J_T|}{r^*}$ samples from $P$ in the target service. We bound the convergence rate of the empirical measure to the true measure by a standard argument of MMD. The proof is in Appendix \ref{sec: proof_mmd_opt_combinatorial}.
\end{sketch}
By combining the above guarantees, we have the following guarantee for the final output.
\begin{corollary}
  Under the assumptions of Corollary \ref{cor: mmd_final} and Theorem \ref{thm: mmd_opt_combinatorial}, we can obtain $\mathcal{D}_T$ such that \begin{align}
    \text{MMD}(\mu_T^{\mathcal{D}^*_T}, \mu_S) &\le O(|\mathcal{D}_S|^{-1/2} + K^{-1/2})
  \end{align} with high probability.
\end{corollary}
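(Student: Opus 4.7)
The plan is to chain the two bounds we already have: Corollary~\ref{cor: mmd_final} controls the gap between the MMD of our rounded output and $\text{OPT}^{\text{combinatorial}}$, while Theorem~\ref{thm: mmd_opt_combinatorial} controls $\text{OPT}^{\text{combinatorial}}$ itself in terms of the sample sizes. Since both are already in hand, the proof of the corollary is essentially a triangle-style combination.

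First I would fix a target failure probability and invoke Corollary~\ref{cor: mmd_final} with, say, $\epsilon = K^{-1/2}$ so that the additive error from optimization and rounding is absorbed into the $O(K^{-1/2})$ term. This produces a random $\mathcal{D}_T^*$ satisfying
\begin{align}
\text{MMD}(\mu_T^{\mathcal{D}_T^*}, \mu_S) \le \text{OPT}^{\text{combinatorial}} + O(K^{-1/2})
\end{align}
with high probability over the randomness of the Frank--Wolfe iterates and the rounding. Next I would invoke Theorem~\ref{thm: mmd_opt_combinatorial}, under Assumption~\ref{assumption: bounded_density_ratio} and the sample-size regime $K \le C_1 |J_T|/r^*$, to conclude that
\begin{align}
\text{OPT}^{\text{combinatorial}} \le C_2\left(\sqrt{\tfrac{B}{|\mathcal{D}_S|}} + \sqrt{\tfrac{B}{K}}\right)
\end{align}
with high probability over the sampling of $\mathcal{D}_S$ and $J_T$ from $P$ and $Q$.

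Finally I would combine the two events by a union bound; since both hold with high probability, their intersection does too, and adding the right-hand sides yields
\begin{align}
\text{MMD}(\mu_T^{\mathcal{D}_T^*}, \mu_S) \le O\!\left(|\mathcal{D}_S|^{-1/2} + K^{-1/2}\right),
\end{align}
which is exactly the claim (the kernel bound $B$ is treated as a constant). The main (and only minor) obstacle is bookkeeping: one has to verify that the randomness sources in Corollary~\ref{cor: mmd_final} (algorithm and rounding) are independent of the sampling randomness in Theorem~\ref{thm: mmd_opt_combinatorial}, so that the union bound applies cleanly, and one has to choose the parameters $L, R, \epsilon, \delta$ in Corollary~\ref{cor: mmd_final} in a way that keeps the combined error at the $O(K^{-1/2})$ scale rather than dominating it. Everything else is direct substitution.
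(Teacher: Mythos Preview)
Your proposal is correct and matches the paper's approach exactly: the paper does not even spell out a separate proof for this corollary, simply stating that it follows ``by combining the above guarantees,'' which is precisely the chaining of Corollary~\ref{cor: mmd_final} (with $\epsilon$ chosen at the $K^{-1/2}$ scale) and Theorem~\ref{thm: mmd_opt_combinatorial} that you describe. Your additional remarks about the union bound and the independence of the randomness sources are accurate bookkeeping that the paper leaves implicit.
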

Therefore, if we prepare a sufficient number of source items, the target service provides a sufficiently rich set of items, and we click sufficiently many items, we can transfer the preference to the target service with a vanishing error in the order of $O(|\mathcal{D}_S|^{-1/2} + K^{-1/2})$.

With the same argument as in Section \ref{sec: formulation}, we can bound the error on the source items.
\begin{corollary} \label{cor: mmd_loss_bound}
  If the target service uses an unknown model $f_T(\cdot; \theta)$ and unknown loss function $\ell_T(\cdot, \cdot)$ to train the model, and the RKHS norm of loss $\ell_T(f_T(\boldx; \theta), y)$ is bounded by $R_\ell$, then the trained model incurs the error \begin{align}
    \mathbb{E}_{P}[\ell_T(f_T(\boldx; \theta), y)] &\le \frac{1}{|\mathcal{D}_T|} \sum_{(\boldx, y) \in \mathcal{D}_T} \ell_T(f_T(\boldx; \theta), y) + R_\ell \cdot \text{MMD}(P, \mathcal{D}_T).
  \end{align} Under the assumptions of Theorem \ref{thm: mmd_opt_combinatorial}, we have \begin{align}
    \mathbb{E}_{P}[\ell_T(f_T(\boldx; \theta), y)] &\le \frac{1}{|\mathcal{D}_T|} \sum_{(\boldx, y) \in \mathcal{D}_T} \ell_T(f_T(\boldx; \theta), y) + O(|\mathcal{D}_S|^{-1/2} + K^{-1/2}). \label{eq: mmd_final_cor}
  \end{align}
\end{corollary}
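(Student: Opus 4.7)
The plan is to split the argument along the two displayed inequalities: the first is a pure RKHS duality statement that does not use any properties of the Pretender algorithm, while the second just substitutes the rate from Theorem~\ref{thm: mmd_opt_combinatorial} into the first. There is essentially no new optimization or concentration argument to perform here; the work has already been done upstream.

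For the first inequality, I would start by noting that, by assumption, the function $g \colon (\boldx, y) \mapsto \ell_T(f_T(\boldx; \theta), y)$ lies in the RKHS $\mathcal{H}$ with $\|g\|_{\mathcal{H}} \le R_\ell$. Therefore $g / R_\ell$ lies in the unit ball of $\mathcal{H}$, so invoking the IPM/MMD duality (Eq.~(\ref{eq: ipm}) specialized to the unit RKHS ball) applied to the two measures $P$ and $\mu_T^{\mathcal{D}_T}$ gives
\begin{align*}
\frac{1}{R_\ell}\Bigl(\mathbb{E}_P[g] - \mathbb{E}_{\mu_T^{\mathcal{D}_T}}[g]\Bigr) \;\le\; \text{MMD}(P, \mu_T^{\mathcal{D}_T}).
\end{align*}
Multiplying through by $R_\ell$ and rearranging, then observing that $\mathbb{E}_{\mu_T^{\mathcal{D}_T}}[g] = \frac{1}{|\mathcal{D}_T|} \sum_{(\boldx, y) \in \mathcal{D}_T} \ell_T(f_T(\boldx; \theta), y)$ because $\mu_T^{\mathcal{D}_T}$ is the uniform empirical measure on $\mathcal{D}_T$, yields exactly the first inequality of the corollary statement.

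For the second inequality, I would directly plug the high-probability rate $\text{MMD}(P, \mu_T^{\mathcal{D}_T}) \le C_2\bigl(2\sqrt{B/|\mathcal{D}_S|} + \sqrt{B/K}\bigr)$ from Theorem~\ref{thm: mmd_opt_combinatorial} into the first inequality, absorbing the constants $R_\ell$, $C_2$, and $\sqrt{B}$ into the $O(\cdot)$ notation, producing the claimed $O(|\mathcal{D}_S|^{-1/2} + K^{-1/2})$ additive overhead.

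The main obstacle is really just bookkeeping rather than mathematics: one must be careful that the function $(\boldx, y) \mapsto \ell_T(f_T(\boldx; \theta), y)$ genuinely lies in the RKHS $\mathcal{H}$ used to define MMD on the joint space of features and labels (the hypothesis is phrased exactly to grant this), and that the high-probability event under which Theorem~\ref{thm: mmd_opt_combinatorial} holds is the same event under which the final bound is asserted; no additional union bound is needed since the first inequality is deterministic once $\mathcal{D}_T$ is fixed.
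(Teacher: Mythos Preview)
Your proposal is correct and matches the paper's approach: the paper does not give a standalone proof of this corollary but simply points back to the IPM argument of Section~\ref{sec: formulation} (specialize the test function class to the RKHS unit ball, use that $g/R_\ell$ lies in it, then plug in the rate from Theorem~\ref{thm: mmd_opt_combinatorial}), which is exactly what you wrote out.
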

This corollary shows that if the target service trains the model sufficiently well so that the training error $\frac{1}{|\mathcal{D}_T|} \sum_{(\boldx, y) \in \mathcal{D}_T} \ell_T(f_T(\boldx; \theta), y)$ is small and if we prepare a sufficient number of source items and click sufficiently many items following \textsc{Pretender}, we can make the right hand side of Eq. \ref{eq: mmd_final_cor} arbitrarily small and we can ensure that the trained model incurs a small error on the source items, meaning that it effectively captures the user’s preferences from the source service.

\subsection{Pretender for the Wasserstein Distance} \label{sec: wasserstein}

We now consider the case where we quantify the discrepancy between the distributions with the $1$-Wasserstein distance. Consider the empirical distributions $\mu_T^{\boldw}$ and $\mu_S$ defined in Eq. \ref{eq: empirical}. The $1$-Wasserstein distance is \begin{align}
  W_1(\mu^\boldw, \nu) &= \inf_{\boldgamma \in \Pi(\mu^\boldw, \nu)} \sum_{j, j'} \boldgamma_{jj'} \|x_j - x'_{j'}\| \label{eq: primal_wasserstein}
\end{align} where \begin{align}
  \Pi(\mu^\boldw, \nu) = \left\{\boldgamma \in \mathbb{R}^{2m \times n}_{\ge 0} \middle| \boldgamma \bold1 = \boldw, \boldgamma^\top \bold1 = \frac{\bold1}{n} \right\}
\end{align} is the set of coupling matrices. The $1$-Wasserstein distance also admits the following dual formulation: \begin{align}
  W_1(\mu^\boldw, \nu) &= \sup_{f \in \mathcal{F}_\text{Lip}} \sum_{j = 1}^{2m} \boldw_j f(x_j) - \frac{1}{n} \sum_{j = 1}^n f(x'_j),
\end{align} where $\mathcal{F}_\text{Lip}$ is the set of $1$-Lipschitz functions. We solve the following convex optimization problem: \begin{align}
  \begin{split}
  &\min_{\boldw \in \mathbb{R}^{2m}}W_1(\mu^\boldw, \nu), \\
  &\quad \text{s.t.} \quad \sum_{j = 1}^{2m} \boldw_j = 1, \qquad 0 \le \boldw_j \le \frac{1}{K} ~(j = 1, 2, \ldots, 2m).
  \end{split}
\end{align} By substituting the primal formulation (Eq. \ref{eq: primal_wasserstein}) and the definition of $\Pi(\mu^\boldw, \nu)$, we have \begin{align}
  \begin{split}
  &\min_{\boldw \in \mathbb{R}^{2m}, \boldgamma \in \mathbb{R}^{2m \times n}} \sum_{j, j'} \boldgamma_{jj'} \|x_j - x'_{j'}\|, \\
  &\quad \text{s.t.} \quad \boldw^\top \bold1 = 1, \qquad \boldgamma \bold1 = \boldw, \qquad \boldgamma^\top \bold1 = \frac{\bold1}{n} \\
  &\quad  \qquad 0 \le \boldw_j \le \frac{1}{K} ~(j = 1, 2, \ldots, 2m), \qquad \boldgamma_{jj'} \ge 0 ~(j = 1, 2, \ldots, 2m, j' = 1, 2, \ldots, n).
  \end{split}
\end{align} This problem is a linear program with optimization variables $\boldw$ and $\boldgamma$. We can solve this problem by a linear program solver in a polynomial time. Let $\hat{\boldw}$ be the solution of this problem. We then select items with the probability $K \hat{\boldw}_j$ and analyze the rounding process. To analyze this step, we make the following mild assumption.
\begin{assumption}[Compact Domain] \label{assumption: compact_domain}
  Each element $x_j$ lies is the unit cube $[0, 1]^d$ and the cost function $\|x - x'\|$ is the Euclidean distance.
\end{assumption}
If the data are not originally in the unit cube, they can be standardized to satisfy this assumption. To bound the Wasserstein distance, we need the following lemma, which is shown in Appendix \ref{sec: proof_covering}.
\begin{lemma} \label{lem: covering}
  Let $\mathcal{F}_\text{Lip}$ be the set of $1$-Lipschitz functions with $f(\frac{\bold1}{2}) = 0$. $\mathcal{F}_\text{Lip}$ can be $L_\infty$-covered by $\tilde{\mathcal{F}}_\varepsilon$,  i.e., for any $f \in \mathcal{F}_\text{Lip}$, there exists $\tilde{f} \in \tilde{\mathcal{F}}_\varepsilon$ such that $\|f - \tilde{f}\|_\infty \le \varepsilon$, with $|f(x)| \le \frac{\sqrt{d}}{2}, \forall f \in \tilde{\mathcal{F}}_\varepsilon$ and $|\tilde{\mathcal{F}}_\varepsilon| \le \exp\left(\log (3) \left(\frac{2 \sqrt{d}}{\varepsilon}\right)^d\right)$.
\end{lemma}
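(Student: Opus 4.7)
\begin{sketch}
The plan is to discretize both the domain and the function values, then count the resulting finite class via a spanning-tree argument on the grid.

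First, I set up the discretization. Choose a spacing $\eta$ of the form $1/N$ for an integer $N$ with $N \le 2\sqrt{d}/\varepsilon$, so that the center $\tfrac{1}{2}\boldsymbol{1}$ lies on $G := (\eta\mathbb{Z})^d \cap [0,1]^d$ and $|G| \le (2\sqrt{d}/\varepsilon)^d$. For each $f \in \mathcal{F}_{\text{Lip}}$, I define $\tilde f$ on $G$ by quantizing $f(x^*)$ to the nearest multiple of a step $q$, anchored at $\tilde f(\tfrac{1}{2}\boldsymbol{1}) = 0$ and clamped to $[-\sqrt{d}/2, \sqrt{d}/2]$. Then I extend off-grid by $\tilde f(x) := \tilde f(x^*)$, where $x^* \in G$ is the nearest grid point to $x$.

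Next I bound $\|f - \tilde f\|_\infty$. For any $x \in [0,1]^d$ with nearest grid point $x^*$, $\|x - x^*\| \le \eta\sqrt{d}/2$, so $1$-Lipschitzness gives $|f(x) - f(x^*)| \le \eta\sqrt{d}/2$, and quantization gives $|f(x^*) - \tilde f(x^*)| \le q/2$. Adding, $\|f - \tilde f\|_\infty \le \eta\sqrt{d}/2 + q/2$. The range bound $|\tilde f(x)| \le \sqrt{d}/2$ follows from $|f(y)| = |f(y) - f(\tfrac{1}{2}\boldsymbol{1})| \le \|y - \tfrac{1}{2}\boldsymbol{1}\| \le \sqrt{d}/2$ for all $y \in [0,1]^d$, together with the clamping step.

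The counting argument is where the constant $\log 3$ enters. Form the grid graph on $G$ by connecting grid points differing by $\eta$ in exactly one coordinate, and pick any spanning tree $T$ rooted at $\tfrac{1}{2}\boldsymbol{1}$, at which $\tilde f$ is fixed to $0$. Along any tree edge $(x^*, x^{*\prime})$, Lipschitzness gives $|f(x^*) - f(x^{*\prime})| \le \eta$ and the two quantizations add at most $q$, so $|\tilde f(x^*) - \tilde f(x^{*\prime})| \le \eta + q$. Taking $q = 2\eta$, the multiples of $2\eta$ inside $[-3\eta, 3\eta]$ are exactly $\{-2\eta, 0, 2\eta\}$, leaving at most $3$ admissible values for the child given its parent. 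A BFS traversal of $T$ then yields $|\tilde{\mathcal{F}}_\varepsilon| \le 3^{|G|-1} \le \exp\!\bigl(\log(3)(2\sqrt{d}/\varepsilon)^d\bigr)$.

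The main obstacle is the joint tuning of $\eta$ and $q$: the quantization step $q$ must be large enough relative to $\eta$ so that only $3$ quantized differences fit across any tree edge (pinning the base of the exponential at $3$), yet small enough (jointly with $\eta$) that $\eta\sqrt{d}/2 + q/2 \le \varepsilon$, and $\eta$ must be large enough that $|G|$ stays below $(2\sqrt{d}/\varepsilon)^d$. The choice $q = 2\eta$ with $\eta$ of order $\varepsilon/\sqrt{d}$ threads all constraints; the remainder of the formal proof is bookkeeping on these inequalities and the divisibility condition that puts $\tfrac{1}{2}\boldsymbol{1}$ on the grid.
\end{sketch}
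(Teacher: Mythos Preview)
Your proposal is correct and follows essentially the same approach as the paper: both discretize $[0,1]^d$ into cells (the paper uses axis-aligned hypercubes of diameter $\le \varepsilon$, you use Voronoi cells of a grid, which are the same thing), quantize the function to a constant on each cell, anchor the value at the center cell to $0$, and count by observing that adjacent cells admit only three quantized increments, yielding the $3^{|G|-1}$ bound. The only cosmetic differences are that the paper rounds the midrange $(\max+\min)/2$ to multiples of $\varepsilon$ whereas you round the grid-point value to multiples of $2\eta$, and that you make the spanning-tree traversal explicit; one small slip to fix in the bookkeeping is that $|G|=(N+1)^d$, not $N^d$, so the constraint on $N$ is $N+1\le 2\sqrt{d}/\varepsilon$ rather than $N\le 2\sqrt{d}/\varepsilon$.
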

We first analyze the postprocessing step. We have the following guarantee.
\begin{proposition} \label{prop: wasserstein_postprocessing}
  Let $\tilde{\boldw}'_j = \frac{1}{K}$ if the $j$-th item is in the final output $\mathcal{D}_T$ and $0$ otherwise. For any $\varepsilon > 0$, \begin{align}
    \sup_{f \in \tilde{\mathcal{F}}_\varepsilon} \sum_{j = 1}^{2m} f(x_j) (\tilde{\boldw}'_j - \tilde{\boldw}_j) \le \sqrt{\frac{d}{2K} \log \frac{1}{\delta}} + \frac{\sqrt{d}}{3K} \log \frac{1}{\delta} \label{eq: wasserstein_postprocessing}
  \end{align} with probability at least $1 - \delta$.
\end{proposition}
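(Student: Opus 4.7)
The plan is to combine a deterministic bound showing that the postprocessing step can alter the weighted sum by only as much as the rearrangement permits, with a Bernstein-type concentration inequality that controls how far the number of items $N := \sum_{j=1}^{2m} I_j$ sampled in the random selection step deviates from $K$.

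First I would observe that, regardless of which items the greedy rule inserts or removes, $\tilde{\boldw}'$ and $\tilde{\boldw}$ differ on at most $|N - K|$ coordinates, with each changed entry differing by $\pm 1/K$ and the remaining entries unchanged. Since Lemma \ref{lem: covering} gives $|f(x_j)| \le \sqrt{d}/2$ for every $f \in \tilde{\mathcal{F}}_\varepsilon$, this yields the deterministic, $f$-uniform bound
\begin{align}
\sup_{f \in \tilde{\mathcal{F}}_\varepsilon}\sum_{j=1}^{2m} f(x_j)\bigl(\tilde{\boldw}'_j - \tilde{\boldw}_j\bigr) \le \frac{\sqrt{d}}{2K}\,|N - K|,
\end{align}
reducing the task to controlling the scalar random variable $|N - K|$.

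For the concentration step, $N$ is a sum of independent Bernoulli random variables with mean $\mathbb{E}[N] = K$ by Eq. (\ref{eq: exp_num_items}) and variance $\text{Var}[N] \le K$ by Eq. (\ref{eq: var_num_items}), and each summand satisfies $|I_j - \mathbb{E}[I_j]| \le 1$. A standard application of Bernstein's inequality then yields
\begin{align}
|N - K| \le \sqrt{2K\log(1/\delta)} + \tfrac{2}{3}\log(1/\delta)
\end{align}
with probability at least $1 - \delta$. Substituting this into the deterministic inequality and simplifying gives precisely the right-hand side of Eq. (\ref{eq: wasserstein_postprocessing}).

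The only delicate point to verify is that the uniform-in-$f$ bound in the first step holds regardless of how the greedy postprocessing selects items to add or remove. This is immediate, however, because the rule only toggles membership in $\mathcal{D}_T$ and so the symmetric difference between the supports of $\tilde{\boldw}'$ and $\tilde{\boldw}$ has size exactly $|N - K|$; the $L_\infty$ bound from Lemma \ref{lem: covering} then handles all $f \in \tilde{\mathcal{F}}_\varepsilon$ simultaneously. With that in hand, the remainder is a direct invocation of Bernstein's inequality, so I do not expect any substantive obstacle.
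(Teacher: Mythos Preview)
Your proposal is correct and follows essentially the same approach as the paper: both reduce the supremum over $\tilde{\mathcal{F}}_\varepsilon$ to a deterministic bound of the form $\frac{\sqrt{d}}{2K}\,|N-K|$ via the $L_\infty$ bound from Lemma~\ref{lem: covering}, and then control $|N-K|$ by Bernstein's inequality using Eqs.~(\ref{eq: exp_num_items}) and (\ref{eq: var_num_items}). The only cosmetic difference is that the paper writes the $\ell_1$ difference $\sum_j |\tilde{\boldw}'_j - \tilde{\boldw}_j|$ explicitly rather than counting toggled coordinates, but this is the same computation.
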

\begin{sketch}
  The proof is similar to Proposition \ref{prop: mmd_postprocessing}, and we obtain the bound with Bernstein's inequality. The full proof is in Appendix \ref{sec: proof_wasserstein_postprocessing}.
\end{sketch}
We then analyze the random selection step. We first bound each test function.
\begin{lemma} \label{prop: wasserstein_random_selection}
  Let $\tilde{\boldw}_j = \frac{I_j}{K}$ with $I_j \sim \text{Bernoulli}(K \hat{\boldw}_j)$. Then, for any $f$ with $|f(x)| \le \frac{\sqrt{d}}{2}$, \begin{align}
    \sum_{j = 1}^{2m} f(x_j) (\tilde{\boldw}_j - \hat{\boldw}_j) \le \sqrt{\frac{d}{2K} \log \frac{1}{\delta}} + \frac{\sqrt{d}}{3K} \log \frac{1}{\delta}
  \end{align} with probability at least $1 - \delta$.
\end{lemma}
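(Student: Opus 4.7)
The plan is to apply Bernstein's inequality to the sum $S := \sum_{j=1}^{2m} f(x_j)(\tilde{\boldw}_j - \hat{\boldw}_j)$. First I would rewrite $S$ as a sum of independent, zero-mean random variables $Z_j := \frac{f(x_j)}{K}(I_j - K\hat{\boldw}_j)$, where independence follows because the $I_j \sim \mathrm{Bernoulli}(K\hat{\boldw}_j)$ are drawn independently in the random selection step and $\mathbb{E}[I_j] = K\hat{\boldw}_j$ gives $\mathbb{E}[Z_j] = 0$.

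Next I would collect the two ingredients Bernstein requires. For the almost-sure bound, using $|I_j - K\hat{\boldw}_j| \le 1$ (since $I_j \in \{0,1\}$ and $K\hat{\boldw}_j \in [0,1]$) and the hypothesis $|f(x_j)| \le \sqrt{d}/2$, we get $|Z_j| \le \frac{\sqrt{d}}{2K} =: M$. For the variance budget, $\mathrm{Var}(I_j) = K\hat{\boldw}_j(1 - K\hat{\boldw}_j) \le K\hat{\boldw}_j$, so
\begin{align*}
\sigma^2 := \sum_{j=1}^{2m} \mathrm{Var}(Z_j) = \sum_{j=1}^{2m} \frac{f(x_j)^2}{K^2}\mathrm{Var}(I_j) \le \frac{1}{K}\sum_{j=1}^{2m} f(x_j)^2 \hat{\boldw}_j \le \frac{d}{4K}\sum_{j=1}^{2m}\hat{\boldw}_j = \frac{d}{4K},
\end{align*}
where the last step uses $|f(x_j)|^2 \le d/4$ and $\sum_j \hat{\boldw}_j = 1$.

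Now I would invoke Bernstein's inequality in the form $\Pr(S \ge t) \le \exp\!\left(-\frac{t^2/2}{\sigma^2 + Mt/3}\right)$ and solve $\exp(-t^2/2/(\sigma^2 + Mt/3)) = \delta$ for $t$. Using the elementary bound $\sqrt{a+b} \le \sqrt{a}+\sqrt{b}$ on the resulting quadratic gives the high-probability bound $t \le \sqrt{2\sigma^2 \log(1/\delta)} + \tfrac{2M}{3}\log(1/\delta)$; substituting the bounds on $\sigma^2$ and $M$ derived above yields the stated expression $\sqrt{\frac{d}{2K}\log\frac{1}{\delta}} + \frac{\sqrt{d}}{3K}\log\frac{1}{\delta}$ (after a mild absorption of the factor $2$ in the linear term, which matches the form in the statement).

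There is really no hard step here; the only subtlety is ensuring that the Bernoulli variance is bounded by $K\hat{\boldw}_j$ (not $\hat{\boldw}_j^2$) so that the factor $\sum_j \hat{\boldw}_j = 1$ can be invoked to eliminate the dependence on $m$. This is precisely the same ``variance of Bernoulli sum'' calculation used in Eq.~(\ref{eq: var_num_items}) and in Proposition~\ref{prop: mmd_random_selection}, so the argument parallels those results but with Bernstein rather than Markov, which is what supplies the improved logarithmic dependence on $\delta$.
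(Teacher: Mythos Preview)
Your proposal is correct and essentially identical to the paper's proof: both define the same zero-mean independent summands, bound each by $M=\sqrt{d}/(2K)$, bound the total variance by $d/(4K)$ via $\mathrm{Var}(I_j)\le K\hat{\boldw}_j$ and $\sum_j \hat{\boldw}_j=1$, and then apply Bernstein's inequality. One small remark: no ``mild absorption'' is needed in the linear term, since $\tfrac{2M}{3}=\tfrac{2}{3}\cdot\tfrac{\sqrt{d}}{2K}=\tfrac{\sqrt{d}}{3K}$ exactly.
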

\begin{sketch}
  We bound the variance of independent random variables $f(x_j) (\tilde{\boldw}_j - \hat{\boldw}_j) ~ (j = 1, 2, \ldots, 2m)$ and apply Bernstein's inequality. The full proof is in Appendix \ref{sec: proof_wasserstein_random_selection}.
\end{sketch}
By combining Proposition \ref{prop: wasserstein_postprocessing}, Lemma \ref{prop: wasserstein_random_selection}, and Lemma \ref{lem: covering}, we have the following guarantee.
\begin{theorem} \label{thm: wasserstein1}
  For any $\delta > 0$, the $1$-Wasserstein distance between the final output $\mathcal{D}_T$ and the source distribution is \begin{align}
    W_1(\mu_T^{\mathcal{D}_T}, \mu_S) &\le \text{OPT}^{\text{combinatorial}} + 3 \sqrt{\frac{d}{K} \log \frac{1}{\delta}} + \frac{\sqrt{d}}{K} \log \frac{1}{\delta} + 8 \sqrt{d} K^{-\frac{1}{d+2}} + 11 \sqrt{d} K^{-\frac{2}{d+2}} + 6\sqrt{d \log \frac{1}{\delta}} K^{-\frac{1}{d+2}-\frac{1}{2}} \\
    & = \text{OPT}^{\text{combinatorial}} + O\left(K^{-\frac{1}{d+2}}\right)
  \end{align} with probability at least $1 - 2\delta$.
\end{theorem}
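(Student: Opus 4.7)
The plan is to use the dual (Kantorovich--Rubinstein) formulation
\[
W_1(\mu,\nu) = \sup_{f\in\mathcal{F}_\text{Lip}} \int f\,d(\mu-\nu)
\]
and decompose $\mu_T^{\mathcal{D}_T}-\mu_S = (\mu_T^{\mathcal{D}_T}-\mu_T^{\tilde{\boldw}}) + (\mu_T^{\tilde{\boldw}}-\mu_T^{\hat{\boldw}}) + (\mu_T^{\hat{\boldw}}-\mu_S)$, where $\hat{\boldw}$ is the continuous LP optimum and $\tilde{\boldw}_j = I_j/K$ are the randomly rounded weights before postprocessing. The last piece is clean: since the LP is solved exactly, $\sup_f \int f\,d(\mu_T^{\hat{\boldw}}-\mu_S) = W_1(\mu_T^{\hat{\boldw}},\mu_S) \le \text{OPT}^{\text{continuous}} \le \text{OPT}^{\text{combinatorial}}$.

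For the first two pieces the intermediate measure $\mu_T^{\tilde{\boldw}}$ need not have unit mass, so I work through the dual pairing on individual functions rather than invoking $W_1$ directly. Since $\mu_T^{\mathcal{D}_T}$ and $\mu_S$ are probability measures, I may restrict the outer $\sup$ to $f\in\mathcal{F}_\text{Lip}$ with $f(\vone/2)=0$, which by Lipschitzness forces $|f|\le\sqrt{d}/2$ on $[0,1]^d$. Lemma \ref{lem: covering} then supplies $\tilde f\in\tilde{\mathcal{F}}_\varepsilon$ with $\|f-\tilde f\|_\infty\le\varepsilon$; substituting $\tilde f$ for $f$ in each of the first two inner products incurs an additive residual of size $\varepsilon\|\tilde{\boldw}'-\tilde{\boldw}\|_1$ and $\varepsilon\|\tilde{\boldw}-\hat{\boldw}\|_1$, respectively. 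The second is $O(\varepsilon)$ since both vectors are non-negative with total mass close to one. The first equals $\varepsilon|\sum_j I_j - K|/K$, which by a Bernstein tail on $\sum_j I_j$ using the variance bound in Eq.~\ref{eq: var_num_items} is $O(\varepsilon\sqrt{\log(1/\delta)/K})$ with high probability; after setting $\varepsilon\asymp\sqrt{d}\,K^{-1/(d+2)}$ this becomes the $K^{-1/(d+2)-1/2}\sqrt{\log(1/\delta)}$ term.

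With $f$ replaced by $\tilde f\in\tilde{\mathcal{F}}_\varepsilon$, Proposition \ref{prop: wasserstein_postprocessing} directly bounds the postprocessing deviation and Lemma \ref{prop: wasserstein_random_selection}, applied pointwise and then union-bounded over $\tilde{\mathcal{F}}_\varepsilon$, bounds the random-selection deviation. Lemma \ref{lem: covering} gives $\log|\tilde{\mathcal{F}}_\varepsilon|\le\log 3\cdot(2\sqrt{d}/\varepsilon)^d$, so the union bound inflates the Bernstein tails from $\log(1/\delta)$ to $\log(1/\delta)+(2\sqrt{d}/\varepsilon)^d$ up to constants. Splitting via $\sqrt{a+b}\le\sqrt{a}+\sqrt{b}$ separates the instance-independent $\sqrt{(d/K)\log(1/\delta)}$ and $(1/K)\log(1/\delta)$ contributions (matching the first two terms of the theorem) from cover-dependent $(d/\varepsilon)^{d/2}/\sqrt{K}$ and $(d/\varepsilon)^d/K$ contributions. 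Choosing $\varepsilon\asymp\sqrt{d}\,K^{-1/(d+2)}$ balances the latter with the $O(\varepsilon)$ residuals and produces the $8\sqrt{d}\,K^{-1/(d+2)}$ and $11\sqrt{d}\,K^{-2/(d+2)}$ terms; the final $1-2\delta$ probability comes from a union bound over the two high-probability events (Proposition \ref{prop: wasserstein_postprocessing} and the union-bounded Lemma \ref{prop: wasserstein_random_selection}).

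The main obstacle is carrying the $(2\sqrt{d}/\varepsilon)^d$ factor from the cover through the Bernstein tail and arranging that, after the optimal choice of $\varepsilon$, the dominant rate is precisely $K^{-1/(d+2)}$ with the displayed numerical constants. The argument is a fairly standard chaining-plus-Bernstein template once the three-step decomposition above is fixed, but extracting the precise constants $3$, $8$, $11$, and $6$ requires careful bookkeeping of the triangle and $\sqrt{a+b}\le\sqrt{a}+\sqrt{b}$ splittings, which I would defer to an appendix.
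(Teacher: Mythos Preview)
Your proposal is correct and follows essentially the same strategy as the paper's proof: the same three-piece decomposition via Kantorovich--Rubinstein duality, the same covering argument (Lemma~\ref{lem: covering}) combined with Bernstein tails and a union bound over $\tilde{\mathcal{F}}_\varepsilon$ (Proposition~\ref{prop: wasserstein_postprocessing} and Lemma~\ref{prop: wasserstein_random_selection}), and the same balancing of $\varepsilon$ against the cover size at scale $\varepsilon\asymp\sqrt{d}\,K^{-1/(d+2)}$. The only cosmetic differences are that the paper applies the $f\to\tilde f$ approximation residual once to the combined $(\tilde{\boldw}'-\hat{\boldw})$ term (yielding a clean $2\varepsilon$ since both are probability vectors) rather than separately to each piece, and that the paper's explicit $\varepsilon$ includes an additive $\tfrac{1}{2}\sqrt{(1/K)\log(1/\delta)}$ term so that $\log(|\tilde{\mathcal{F}}_\varepsilon|/\delta)\le (8K/d)\varepsilon^2$ holds directly, avoiding your $\sqrt{a+b}\le\sqrt a+\sqrt b$ split.
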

\begin{sketch}
  We set $\varepsilon = \sqrt{d} \left(2^{\frac{d-2}{d+2}} \log^{\frac{1}{d+2}}(3) K^{-\frac{1}{d+2}} + \frac{1}{2} \sqrt{\frac{1}{K} \log \frac{1}{\delta}}\right)$ in Lemma \ref{lem: covering}. We bound $|\tilde{\mathcal{F}}_\varepsilon|$, set $\delta \leftarrow \delta / |\tilde{\mathcal{F}}_\varepsilon|$ in Proposition \ref{prop: wasserstein_random_selection}, and apply the union bound. We then apply Proposition \ref{prop: wasserstein_postprocessing}, Lemma \ref{prop: wasserstein_random_selection}, Lemma \ref{lem: covering}, and the Kantrovich-Rubinstein duality to obtain the desired result. The full proof is in Appendix \ref{sec: proof_wasserstein1}.
\end{sketch}
We can also bound the error of the trained model on the source items by the same argument as in Theorem \ref{thm: mmd_opt_combinatorial} and Corollary \ref{cor: mmd_loss_bound}. The difference is that the sample complexity of the Wasserstein distance is $\Theta(\frac{1}{\varepsilon^d})$ \cite{dudley1969speed} instead of $\Theta(\frac{1}{\varepsilon^2})$ of the MMD. This is a fundamental property of the Wasserstein distance. Consequently, the optimal value $\text{OPT}^{\text{combinatorial}}$ and the loss bound scale as $\Theta(\frac{1}{\varepsilon^d})$ in the case of the Wasserstein distance, compared to $\Theta(\frac{1}{\varepsilon^2})$ for the MMD. This difference implies that the error bound can be significantly larger when the data dimension $d$ is high, making the Wasserstein distance less practical for high-dimensional settings.

\section{Discussions}

\subsection{Discussion on Variants}

We have presented the variants for MMD and the Wasserstein distance. Each has its own advantages and disadvantages. The primary advantage of MMD is its sample complexity. MMD is not affected by the number of dimensions of the feature space and sidesteps the curse of dimensionality while the Wasserstein distance suffers from it \cite{dudley1969speed}. The advantage of the Wasserstein distance is its generality. The assumption that the loss function $\ell_T(f_T(\boldx; \theta), y)$ is Lipschitz continuous is very mild and holds in many settings, such as logistic regression, matrix factorization, and neural networks. MMD requires the loss is in the RKHS, which may not hold in practice.

We also note that our analysis can be extended to other metrics as long as the continuous optimization problem (Eq. \ref{eq: continuous_optimization}) is tractable and the covering number of the test function class $\mathcal{F}$ is bounded by the same argument of Section \ref{sec: wasserstein}. For example, our analysis can be applied to the discrepancy distance \cite{mansour2009domain}, yielding a sharp generalization bound when the model is linear and the loss is the squared loss. Practitioners can choose metrics that best suits their specific application.

\subsection{Optimum Value is Not Monotone} \label{sec: monotone}

It should be noted that the optimum value of the combinatorial problem is not monotonic in $K$.
\begin{proposition}
  $\text{OPT}^{\text{combinatorial}}$ is not monotonic in $K$.
\end{proposition}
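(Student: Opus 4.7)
The plan is to exhibit a minimal explicit counterexample in which $\text{OPT}^{\text{combinatorial}}$ strictly increases from $K = 1$ to $K = 2$. I would take $\mathcal{D}_S$ to consist of a single preference $(\boldx_0, 0)$ with $\boldx_0 = 0 \in \mathbb{R}$, and target items $I_T = \{i_1, i_2\}$ with features $\boldx_{i_1} = 0$ and $\boldx_{i_2} = D$ for some $D > 1$. The labeled candidate set $J_T$ then has four distinct elements: $(i_1, 0), (i_1, 1), (i_2, 0), (i_2, 1)$.

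First, for $K = 1$ the choice $\mathcal{D}_T = \{(i_1, 0)\}$ gives $\mu_T^{\mathcal{D}_T} = \delta_{(0,0)} = \mu_S$, so $\text{OPT}^{\text{combinatorial}}(1) = 0$ under any metric that vanishes only on equal distributions (both MMD with a characteristic kernel and the $1$-Wasserstein distance qualify). Next, for $K = 2$ I would enumerate the $\binom{4}{2} = 6$ feasible subsets. In every one of them the empirical measure places mass $\tfrac12$ at some point $z \neq (0,0)$, and the nearest such $z$ sits at Euclidean distance $1$ (namely $(0,1)$, coming from $(i_1, 1)$). For the $1$-Wasserstein cost this forces $\text{OPT}^{\text{combinatorial}}(2) \ge \tfrac12$, and indeed it is attained exactly by $\{(i_1, 0), (i_1, 1)\}$. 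For MMD with a characteristic kernel, an analogous check gives a strictly positive lower bound because $\delta_{(0,0)}$ is not in the convex hull, in the RKHS, of the three remaining candidate point-masses. Either way, $\text{OPT}^{\text{combinatorial}}(2) > 0 = \text{OPT}^{\text{combinatorial}}(1)$, proving non-monotonicity.

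The main ``obstacle'' here is only conceptual: one has to notice that the constraint $|\mathcal{D}_T| = K$ together with the distinctness of pairs in $\mathcal{D}_T \subset I_T \times \{0,1\}$ forbids re-using the perfect match $(i_1, 0)$ to fill the second slot, so being forced to include a second, distinct labeled pair can strictly worsen the objective. Once this observation is in place, the argument reduces to routine case analysis over a finite candidate set, requiring no substantive calculation.
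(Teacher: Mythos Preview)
Your proposal is correct and follows essentially the same approach as the paper: both construct a minimal counterexample with a single source preference that can be matched exactly when $K=1$ (giving $\text{OPT}^{\text{combinatorial}}=0$) but not when $K=2$, because the forced second element makes $\mu_T^{\mathcal{D}_T}\neq\mu_S$ and hence the distance is strictly positive. Your version is slightly more detailed (you enumerate the six candidate subsets and compute the Wasserstein value explicitly), whereas the paper simply notes that $\mathcal{D}_T\neq\mathcal{D}_S$ forces a nonzero value, but the underlying idea is identical.
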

\begin{proof}
  We prove the proposition by a counterexample. Let $I_S = \{1\}, I_T = \{1, 2\}, \boldx_1 = 0, \boldx_2 = 1$, and $\mathcal{D}_S = \{(1, 1)\}$. When $K = 1$, the optimum value is $0$ by selecting $\mathcal{D}_T = \{(1, 1)\}$. When $K = 2$, the optimum value is non-zero because we need to select items other than $(1, 1)$, and $\mathcal{D}_S \neq \mathcal{D}_T$.
\end{proof}
One might intuitively expect that increasing $K$ would lead to a better or at least no worse solution. However, the above proposition shows that this is not always the case. In many practical situations, the goal is not to click exactly $K$ items, but rather to minimize the number of interactions while effectively transferring preferences. To address this issue, we can parallelly run the algorithm with $K' = 1, 2, \ldots, K$ and select the best solution with smallest $\mathcal{D}(\mu_T, \mu_S)$ among them.

\subsection{Limitation: Inconsistent Features}

We have assumed that we have access to the features $\boldx_i$ of the items $i$ in the problem setting. In practice, rich feature representations may not always be available, or the target service may utilize a different feature space than what we expect for training its recommendation model. This issue is particularly relevant when the target service employs collaborative filtering, a widely used approach in recommender systems. In such cases, the service may not use explicit features such as text, tags, or visual information, but use collaborative features, which are not available to end users. In such a case, our approach and analysis cannot be directly applied. Nevertheless, we argue that our approach remains valuable even in such scenarios. First, explicit features often serve as good surrogates for implicit features. For example, movies with similar descriptions tend to attract similar audiences. Second, recent studies have demonstrated that even end users can estimate implicit features from recommendation networks \cite{sato2022towards}. These estimated features can then be fed into our approach. Extending our analysis to cases where feature spaces are inconsistent is an important direction for future work.

\section{Related Work}

\subsection{Cold Start Problem}

The cold start problem is a fundamental problem in recommender systems. It arises when a new user (or a new item) enters the system, and the system lacks sufficient information to provide accurate recommendations. Although many methods have been proposed to address the cold start problem \cite{park2006naive, gantner2010learning, lika2014facing, wang2019enhansing}, all of them require the service providers to implement the method. In contrast, our approach is unique in that it can be applied directly by an end user without requiring any modifications to the service itself. This characteristic broadens the applicability of our method, allowing it to be used in services that lack built-in functionalities to address the cold start problem.

\subsection{Quadrature}

Quadrature is a technique to approximate the integral of a function by summing the function values at a finite number of points \cite{rasmussen2002bayesian, huszar2012optimally, harvey2014near, equivalence2017bach, hayakawa2022positively}. This is essentially equivalent to finding a discrete measure that approximates the given measure. Quadrature is widely used in numerical analysis and machine learning. One of the common applications is coresets \cite{karnin2019discrepancy, sener2018active, campbell2019automated, mirzasoleiman2020coresets}, which is a small set of training points that approximates the loss of the model on the entire training set. Our approach can be seen as a quadrature of $\int \ell_T \, d\mu_S$ with points $\mathcal{D}_T$. The main difference is that standard quadrature methods use arbitrary points and/or weights and sidestep combinatorial problems \cite{back2012equivalence, briol2015frank}, while our approach uses only the items in the target set, which naturally lead to the combinatorial optimization problem. For example, quadrature methods based on the Frank-Wolfe algorithm \cite{welling2009herding, chen2010super, back2012equivalence} output sparse weights, but they may choose the same items repeatedly in general, and the resulting weights are not uniform. Such output cannot be applied to our setting because users cannot thumbs up the same item multiple times in most services. Some approaches \cite{karnin2019discrepancy, dwivedi2024kernel} such as Kernel thinning \cite{dwivedi2024kernel} output a subset of the input points with uniform weights, which is similar to our approach. However, these methods assume that the candidate points are sampled according to the distribution being approximated. If this assumption does not hold, as in our case, these methods cannot be directly applied. Our proposed method can be used even when the input points are arbitrary. Other methods \cite{wei2013using, wei2015submodularity} employ submodular optimization and greedy algorithms, achieving a $(1 - \frac{1}{e})$-approximation ratio. However, the gap of $(1 - \frac{1}{e})$ does not vanish as the number of items increases. By contrast, our approach can achieve the vanishing error as Corollary \ref{cor: mmd_loss_bound} shows thanks to the continuous optimization approach and the careful rounding process. To the best of our knowledge, our work is the first to provide a theoretical guarantee for such a general and combinatorial setting. We believe that this result is of independent interest.

\subsection{User-side Realization}

User-side realization refers to the concept in which end users implement desired functionalities on their own without requiring modifications to the service itself. Many users experience dissatisfaction with services. Even if they want some functionalities and request them to the service provider, the provider may not implement them due to various reasons such as cost, complexity, and simple negligence. After all, service providers are not volunteers but businesses. In such cases, the only options users have are not satisfactory, keep using the service despite their dissatisfaction or leave the service. User-side realization provides a proactive alternative to this dilemma. This concept has been explored in various fields such as recommender systems \cite{sato2022private, sato2022towards, sato2024overhead}, search engines \cite{sato2022retreiving, sato2022clear, diligenti2000focused, radlinski2006improving, nakano2021webgpt}, and privacy \cite{sato2024making}. The main advantage of the user-side realization is that it can be used in services that do not have special functionalities to address the problem, and it broadens the scope of the applicability of the solution. For a more comprehensive discussion on user-side realization, we refer the reader to the Ph.D. thesis by \cite{sato2024user}. Our approach constitutes a novel application of user-side realization to the cold start problem in recommender systems.

\section{Experiments}

\subsection{Convergence Analysis} \label{sec: convergence_experiments}

\begin{figure}[t]
  \centering
  \includegraphics[width=\linewidth]{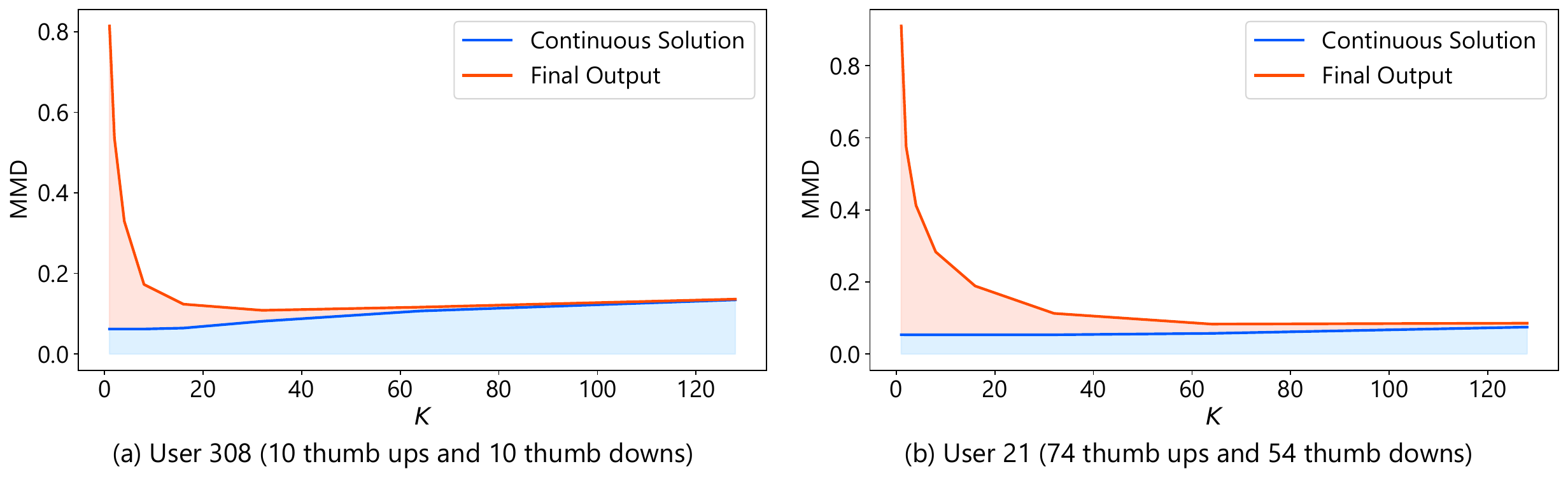}
  \caption{MMD as a function of the number of selected items $K$. The optimal value of the combinatorial optimization problem is intractable but is guaranteed to lie somewhere between the red and blue lines. We can see that the difference between the proposed method and the optimal value decreases as $K$ increases. As we discussed in Section \ref{sec: monotone}, the optimal value is not monotonic in $K$, and the sweet spots lie around $32$ to $64$ in both cases. We can also see that the optimal values get worse more quickly in (a) because the source data contain fewer items, the relevant items in the target set are exhausted more quickly, and we are forced to select less relevant items when we increase $K$.}
  \label{fig: convergence}
\end{figure}

We first confirm that the regret of \textsc{Pretender} converges to zero as the number $K$ of selected items increases.

\textbf{Experimental Setting.} We use the MovieLens 100K dataset \cite{harper2016movielens}. We consider rating $\ge 4$ as positive feedback (i.e., thumbs up) and rating $< 4$ as negative feedback (i.e., thumbs down). We create two virtual services $S$ and $T$. We include each movie in $S$ with probability $0.5$. We carry out the same process independently to create $T$. As a result, these services share about half of the movies. We use genres (e.g., Action, Adventure, Animation) and the release year as the features of the movies. They are encoded as a $90$-dimensional multi-hot vector. We concatenate $Cy \in \{0, C\}$ to the feature vector to define the discrete distribution, where $y \in \{0, 1\}$ indicates thumbs up or thumbs down, and $C = 10$ is the hyperparameter that controls the emphasis on the label when we measure the distance between two data points. We focus on user 308, who is the first user who has $10$ thumbs up and $10$ thumbs down, and user 21, who is the first user who has more than $100$ ratings. For each user $u$, we aim to transfer the preference $\mathcal{D}_u \cap \mathcal{D}_S$ by choosing $K$ items from $\mathcal{D}_T$. We use MMD with the Gaussian kernel and bandwidth $\sigma = 1$ as the distance measure. We use $L = 1000$ iterations of the Frank-Wolfe algorithm and $R = 100$ trials of the rounding process. We set $K = 1, 2, 4, 8, \ldots, 128$ to see the convergence behavior. 

The goal of the experiment is to confirm that the regret of the proposed algorithm, i.e., the MMD of the proposed method minus the MMD of the optimal solution, converges to zero as the number of selected items $K$ increases. However, computing the true optimal solution is intractable due to the combinatorial nature of the optimization problem. Instead, we use the optimal value of the continuous relaxation as a benchmark, which serves as a provable lower bound for the optimal value of the combinatorial optimization problem. 

\textbf{Results.} Figure \ref{fig: convergence} shows the regret of \textsc{Pretender} as a function of the number of selected items $K$. The optimal value of the combinatorial optimization problem is intractable but is guaranteed to lie somewhere between the red and blue lines. We observe that the difference between the proposed method and the optimal value decreases as $K$ increases. As we discussed in Section \ref{sec: monotone}, the optimal value is not monotonic in $K$, and the sweet spots lie around $32$ to $64$ in both cases. We can also see that the optimal values get worse more quickly in (a) because the source preferences contain fewer items, and the relevant items in the target set are exhausted more quickly. Consequently, when $K$ increases, the algorithm is forced to select less relevant items in (a).

\subsection{Quantitative Evaluation} \label{sec: quantitative_experiments}

\begin{figure*}[t]
  \captionsetup{labelformat=empty}
  \caption{Table 1. Performance Comparison. Each value represents the average MMD across all users. Lower is better. The standard deviation is computed across users. The proposed method is much better than the baseline methods and is close to the optimal continuous solution.}
  \vspace{-0.1in}
  \centering
  \includegraphics[width=0.9\linewidth]{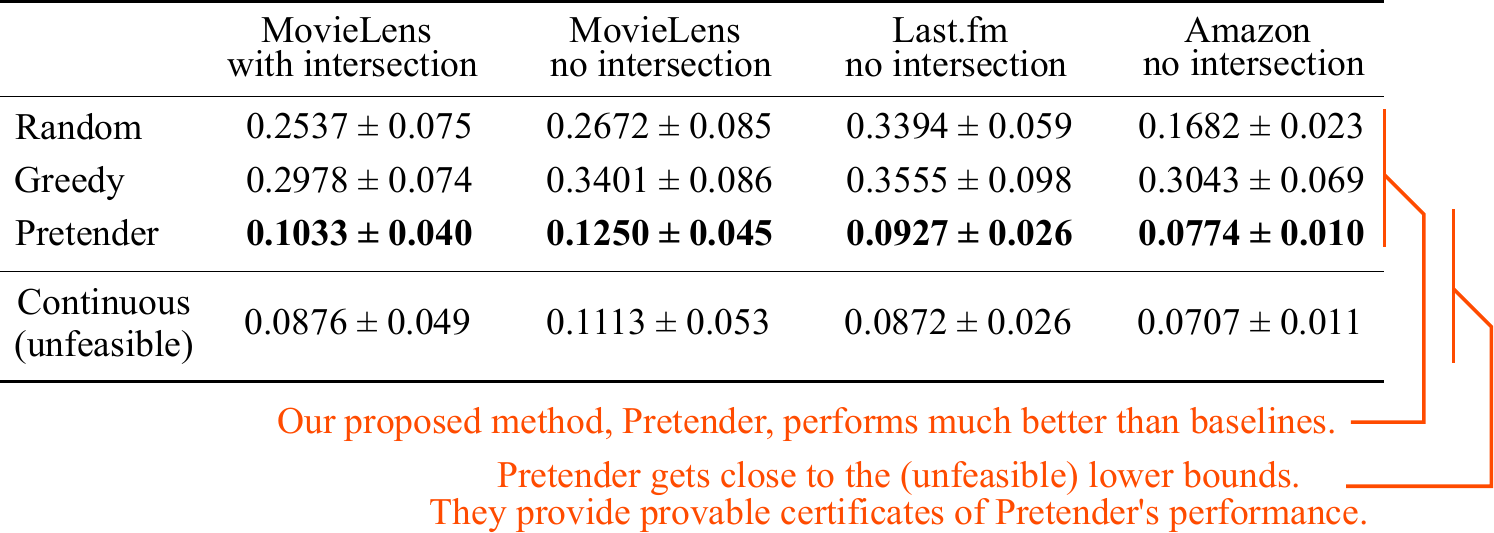}
  \vspace{-0.2in}
\end{figure*}
\addtocounter{table}{1}
\addtocounter{figure}{-1}

\textbf{Datasets.} We evaluate our method using datasets from three domains: MovieLens 100k (movies), Last.fm (music) \cite{cantador2011hetrec}, and Amazon-Home-and-Kitchen (e-commerce) \cite{he2016ups, mcauley2015image}. We use the same setting as the convergence analysis for the MovieLens 100K dataset except that we use all users. For the Last.fm dataset, we consider musicians with positive weights as thumbs up and we sample the same number of random musicians for thumbs down. We use the tags (e.g., metal, 80's, chillout) as the features of the artists. As the number of tags is large, we apply the principal component analysis (PCA) to reduce the dimension to $50$ and standardize the variance. For the Amazon-Home-and-Kitchen dataset, we consider ratings $\ge 4$ as thumbs up and ratings $< 4$ as thumbs down. We use the bag-of-words representation of the product reviews. We apply PCA to reduce the dimension to $50$ and standardize the variance. We consider two different settings for defining source and target services. The first setting, with intersection, is identical to the one used in the convergence analysis. The second setting, no intersection, is that we split the items into source and target services without any intersection. This setting is more challenging because the methods cannot copy any source items to the target service. For Last.fm and Amazon-Home-and-Kitchen datasets, we use the Gaussian kernel with the bandwidth $\sigma = 10$ to reflect the scale of the features. We set $K = 100$ for all datasets.

\textbf{Baselines.} Since there are no existing methods that solve the cold start problem by an end user, we use two simple baselines to validate that the proposed method is reasonably effective. The first baseline is random selection, which randomly selects $K$ items from $\mathcal{D}_T$. The second baseline is the greedy selection, which selects $K$ items in $J_T$ that have the smallest distance $d(x, \mathcal{D}_S) = \min_{x' \in \mathcal{D}_S} \|x - x'\|_2$ to the source items $\mathcal{D}_S$. Additionally, we report the performance of the optimal solution of the continuous optimization problem.

\textbf{Results.} Table 1 shows MMD values between the source preference and the output of the methods. Lower is better. \textsc{Pretender} is much better than the baseline methods and is close to the optimal continuous solution. Note that checking the closeness is useful in practice as well. The continuous solution is obtained as a byproduct of \textsc{Pretender}. After we obtain the final output, we can verify that the final solution is indeed a reasonable one just by comparing the two values. We observe that the greedy method is worse than the random method. We hypothesize that it is because the greedy method chooses items that are too similar to some source items and the diversity is lost. We also observe that \textsc{Pretender} is robust to the setting with no intersection. The performance is slightly worse than in the setting with intersection but is still reasonablly well.

\subsection{Case Study} \label{sec: case_study}

\begin{table}[tb]
  \caption{\textbf{Case Study.} The source items the user interacted with and the $K = 20$ target items selected by \textsc{Pretender}. \textsc{Pretender} selects the very items that the user interacted with in the source service as shown in \textcolor[HTML]{ff4b00}{red}. The user interacted with movies in 1996 and 1997, and \textsc{Pretender} mostly selects movies in 1996 and 1997. Notable exceptions are Alice in Wonderland (1951) and Fantasia (1940) selected by \textsc{Pretender}, as shown in \textcolor[HTML]{005aff}{blue}. We hypothesize that this reflects the fact that the user thumbed down Cats Don’t Dance (1997), which is an animated movie for children. Since Cats Don't Dance (1997) is not in the target service, \textsc{Pretender} selects similar movies from the target service and sends a signal that the user does not like this type of movies.}
  \label{tab: comparison}
  \centering
  \scalebox{0.8}{%
  \begin{tabular}{llll}
    \hline
    \textbf{Source Thumbs Up} & \textbf{Source Thumbs Down} & \textbf{Pretender Thumbs Up} & \textbf{Pretender Thumbs Down} \\
    \hline
    Kolya (1996) & Ulee's Gold (1997) & Shall We Dance? (1996) & Absolute Power (1997) \\
    \textcolor[HTML]{ff4b00}{The English Patient (1996)} & \textcolor[HTML]{ff4b00}{Fly Away Home (1996)} & \textcolor[HTML]{ff4b00}{The English Patient (1996)} & Breakdown (1997) \\
    \textcolor[HTML]{ff4b00}{G.I. Jane (1997)} & Mrs. Brown (1997) & \textcolor[HTML]{ff4b00}{G.I. Jane (1997)} & \textcolor[HTML]{ff4b00}{Fly Away Home (1996)} \\
    \textcolor[HTML]{ff4b00}{The Edge (1997)} & \textcolor[HTML]{ff4b00}{Lost Highway (1997)} & \textcolor[HTML]{ff4b00}{The Edge (1997)} & \textcolor[HTML]{ff4b00}{Lost Highway (1997)} \\
    A Smile Like Yours (1997) & The Game (1997) & Das Boot (1981) & \textcolor[HTML]{005aff}{Alice in Wonderland (1951)} \\
     & \textcolor[HTML]{ff4b00}{Seven Years in Tibet (1997)} & Addicted to Love (1997) & \textcolor[HTML]{005aff}{Fantasia (1940)} \\
     & \textcolor[HTML]{005aff}{Cats Don't Dance (1997)} & The Pest (1997) & The Jackal (1997) \\
     & Stag (1997) &  & \textcolor[HTML]{ff4b00}{Seven Years in Tibet (1997)} \\
     &  &  & Paradise Road (1997) \\
     &  &  & 'Til There Was You (1997) \\
     &  &  & Le Bonheur (1965) \\
     &  &  & All Over Me (1997) \\
     &  &  & Rough Magic (1995) \\
    \hline
    \end{tabular}    
  }
\end{table}

As a case study, we use  the MovieLens dataset and focus on user 308, who is the first user who has $10$ thumbs up and $10$ thumbs down. We investigate the target items selected by \textsc{Pretender} under the same experimental setup as Section \ref{sec: convergence_experiments} and with $K = 20$. Table \ref{tab: comparison} shows the source items the user interacted with and the target items selected by \textsc{Pretender}. Overall, \textsc{Pretender} suceeds in transferring the user's preferences. \textsc{Pretender} selects several items that the user previously interacted with in the source service, as highlighted in \textcolor[HTML]{ff4b00}{red}. Note that the virtual services $S$ and $T$ share some items but not all items, and thus we cannot copy all items. We can also see that the user interacted with movies in 1996 and 1997, and \textsc{Pretender} mostly selects movies from the same period. This leads to similar preferences between the source and target services. Notable exceptions are Alice in Wonderland (1951) and Fantasia (1940) selected by \textsc{Pretender}, as shown in \textcolor[HTML]{005aff}{blue}. We hypothesis that this reflects the fact that the user thumbed down Cats Don't Dance (1997), which is an animated movie for children. Since Cats Don't Dance (1997) is not in the target service, \textsc{Pretender} selects similar movies from the target service and sends a signal that the user does not like this type of movies to the terget service. This case study illustrates that \textsc{Pretender} effectively reflects the user's preferences.

\section{Conclusion}

In this paper, we made the following contributions.

\begin{itemize}
  \item We proposed a new problem setting, solving the cold start problem on the user's side (Section \ref{sec: problem_setting}).
  \item We proposed \textsc{Pretender} to solve the problem by formulating it as a metric minimization between distributions (Sections \ref{sec: formulation} -- \ref{sec: optimization}).
  \item We formally showed that the regret of \textsc{Predenter} converges to zero as the number of selected items increases (Sections \ref{sec: mmd} and \ref{sec: wasserstein}). \begin{itemize}
    \item We showed that the regret converges at a rate of $O(\frac{1}{\sqrt{K}})$ for MMD (Corollary \ref{cor: mmd_final}) and $O(K^{-\frac{1}{d+2}})$ for the Wasserstein distance (Theorem \ref{thm: wasserstein1}).
    \item We also showed that the optimal value also converges to zero as the number of source items and the number of selected items increases (Corollary \ref{cor: mmd_loss_bound}).
  \end{itemize}
  \item We empirically validated the effectiveness of \textsc{Pretender} through experiments (Sections \ref{sec: convergence_experiments} -- \ref{sec: case_study}). \begin{itemize}
    \item We confirmed that the regret of the proposed algorithm converges to zero with a reasonable number of selected items (Section \ref{sec: convergence_experiments}).
    \item We showed that the proposed method is much better than baseline methods and is close to the optimal continuous solution with movie, music, and e-commerce datasets (Section \ref{sec: quantitative_experiments}).
    \item We presented a case study, illustrating that the items selected by \textsc{Pretender} effectively reflect the user's preferences (Section \ref{sec: case_study}).
  \end{itemize}
\end{itemize}

We believe that this work introduces a new research direction for addressing the cold start problem from the users' perspective. We hope that our work will inspire further research in this area.

\subsection*{Acknowledgements}

We thank Yuki Takezawa and Satoshi Hayakawa for helpful discussions. 

\bibliography{main}
\bibliographystyle{abbrvnat}

\appendix

\section{Proof of Proposition \ref{prop: mmd_random_selection}} \label{sec: proof_mmd_random_selection}

\begin{proposition}
  Let $\tilde{\boldw}_j = \frac{I_j}{K}$ with $I_j \sim \text{Bernoulli}(K \hat{\boldw}_j)$. Then, \begin{align}
    \left\|\sum_{j = 1}^{2m} \tilde{\boldw}_j \phi(x_j) - \sum_{j = 1}^{2m} \hat{\boldw}_j \phi(x_j)\right\|_\mathcal{H} &\leq \sqrt{\frac{B}{\delta K}}
  \end{align} with probability at least $1 - \delta$.
\end{proposition}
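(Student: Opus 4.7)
The plan is to follow the sketch outlined by the author: first control the second moment of the error vector in the RKHS in closed form using the independence of the Bernoulli trials, then convert this $L^2$ bound into a high-probability bound via Markov's inequality.

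To begin, I would expand the squared RKHS norm as
\begin{align}
\left\|\sum_{j=1}^{2m} (\tilde{\boldw}_j - \hat{\boldw}_j) \phi(x_j)\right\|_\mathcal{H}^2 = \sum_{j,j'=1}^{2m} (\tilde{\boldw}_j - \hat{\boldw}_j)(\tilde{\boldw}_{j'} - \hat{\boldw}_{j'}) k(x_j, x_{j'}),
\end{align}
and then take expectation over the independent random variables $I_1, \ldots, I_{2m}$. Because $\mathbb{E}[\tilde{\boldw}_j] = \hat{\boldw}_j$ (as already shown in the body of the paper) and the $I_j$ are independent, all off-diagonal cross terms vanish. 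For the diagonal terms, a direct computation using the variance of a Bernoulli$(K \hat{\boldw}_j)$ random variable gives $\mathbb{E}[(\tilde{\boldw}_j - \hat{\boldw}_j)^2] = \hat{\boldw}_j(1 - K\hat{\boldw}_j)/K \le \hat{\boldw}_j / K$.

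Combining these with Assumption \ref{assumption: bounded_kernel} (so $k(x_j, x_j) \le B$) and $\sum_j \hat{\boldw}_j = 1$, the expected squared RKHS norm is bounded by
\begin{align}
\mathbb{E}\!\left[\left\|\sum_{j=1}^{2m} (\tilde{\boldw}_j - \hat{\boldw}_j) \phi(x_j)\right\|_\mathcal{H}^2\right] \le \sum_{j=1}^{2m} \frac{\hat{\boldw}_j}{K} B \le \frac{B}{K},
\end{align}
which matches the $O(B/K)$ target in the sketch.

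The final step is to apply Markov's inequality to the nonnegative random variable $Y = \|\sum_j (\tilde{\boldw}_j - \hat{\boldw}_j) \phi(x_j)\|_\mathcal{H}^2$: with $\mathbb{E}[Y] \le B/K$, taking the threshold $B/(\delta K)$ gives $\Pr[Y \ge B/(\delta K)] \le \delta$, and taking square roots yields the claimed bound with probability at least $1-\delta$. I do not expect any real obstacle here; the only care needed is handling the cross terms correctly (they vanish exactly because of the independence of the $I_j$ together with the fact that the fluctuations are mean-zero), and noting that the argument is clean precisely because we are measuring discrepancy in the RKHS seminorm rather than in MMD directly, so no interaction with the source measure appears.
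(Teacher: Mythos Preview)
Your proposal is correct and follows essentially the same approach as the paper: compute the expected squared RKHS norm, exploit independence and the mean-zero property so only diagonal variance terms survive, bound these by $B/K$ using Assumption~\ref{assumption: bounded_kernel} and $\sum_j \hat{\boldw}_j = 1$, then apply Markov's inequality. Your write-up is in fact slightly more streamlined than the paper's, which expands the three bilinear sums separately before recombining them, but the underlying computation is identical.
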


\begin{proof}
  \begin{align}
    &\mathbb{E}_I\left[\left\|\sum_j \tilde{\boldw}_j \phi(x_j) - \sum_j \hat{\boldw}_j \phi(x_j)\right\|_\mathcal{H}^2\right] \\ 
    &= \sum_{j, j'} \mathbb{E}_I[\tilde{\boldw}_j \tilde{\boldw}_{j'}] \langle \phi(x_j), \phi(x_{j'}) \rangle - 2 \sum_{j, j'} \mathbb{E}_I[\tilde{\boldw}_j \hat{\boldw}_{j'}] \langle \phi(x_j), \phi(x_{j'}) \rangle + \sum_{j, j'} \mathbb{E}_I[\hat{\boldw}_j \hat{\boldw}_{j'}] \langle \phi(x_j), \phi(x_{j'}) \rangle \\
    &\stackrel{\text{(a)}}{=} \sum_{j, j'} \mathbb{E}_I\left[\frac{I_j}{K} \frac{I_{j'}}{K}\right] \langle \phi(x_j), \phi(x_{j'}) \rangle - 2 \sum_{j, j'} \mathbb{E}_I\left[\frac{I_j}{K} \hat{\boldw}_{j'}\right] \langle \phi(x_j), \phi(x_{j'}) \rangle + \sum_{j, j'} \mathbb{E}_I[\hat{\boldw}_j \hat{\boldw}_{j'}] \langle \phi(x_j), \phi(x_{j'}) \rangle \\
    &\stackrel{\text{(b)}}{=}  \frac{1}{K^2} \sum_{j, j'} \mathbb{E}_I[I_j I_{j'}] \langle \phi(x_j), \phi(x_{j'}) \rangle - 2 \sum_{j, j'} \frac{1}{K} \mathbb{E}_I[I_j] \hat{\boldw}_{j'} \langle \phi(x_j), \phi(x_{j'}) \rangle + \sum_{j, j'} \hat{\boldw}_j \hat{\boldw}_{j'} \langle \phi(x_j), \phi(x_{j'}) \rangle \\
    &\stackrel{\text{(c)}}{=} \frac{1}{K^2} \sum_{j, j'} \mathbb{E}_I[I_j I_{j'}] \langle \phi(x_j), \phi(x_{j'}) \rangle - 2 \sum_{j, j'} \hat{\boldw}_j \hat{\boldw}_{j'} \langle \phi(x_j), \phi(x_{j'}) \rangle + \sum_{j, j'} \hat{\boldw}_j \hat{\boldw}_{j'} \langle \phi(x_j), \phi(x_{j'}) \rangle \\
    &= \frac{1}{K^2} \sum_{j, j'} \mathbb{E}_I[I_j I_{j'}] \langle \phi(x_j), \phi(x_{j'}) \rangle - \sum_{j, j'} \hat{\boldw}_j \hat{\boldw}_{j'} \langle \phi(x_j), \phi(x_{j'}) \rangle \\
    &= \frac{1}{K^2} \sum_j \mathbb{E}_I[I_j^2] \langle \phi(x_j), \phi(x_j) \rangle + \frac{1}{K^2} \sum_{j \neq j'} \mathbb{E}_I[I_j I_{j'}] \langle \phi(x_j), \phi(x_{j'}) \rangle - \sum_{j, j'} \hat{\boldw}_j \hat{\boldw}_{j'} \langle \phi(x_j), \phi(x_{j'}) \rangle \\
    &\stackrel{\text{(d)}}{=} \frac{1}{K^2} \sum_j \mathbb{E}_I[I_j] \langle \phi(x_j), \phi(x_j) \rangle + \frac{1}{K^2} \sum_{j \neq j'} \mathbb{E}_I[I_j] \mathbb{E}_I[I_{j'}] \langle \phi(x_j), \phi(x_{j'}) \rangle - \sum_{j, j'} \hat{\boldw}_j \hat{\boldw}_{j'} \langle \phi(x_j), \phi(x_{j'}) \rangle \\
    &\stackrel{\text{(e)}}{=} \frac{1}{K^2} \sum_j \mathbb{E}_I[I_j] \langle \phi(x_j), \phi(x_j) \rangle + \sum_{j \neq j'} \hat{\boldw}_j \hat{\boldw}_{j'} \langle \phi(x_j), \phi(x_{j'}) \rangle - \sum_{j, j'} \hat{\boldw}_j \hat{\boldw}_{j'} \langle \phi(x_j), \phi(x_{j'}) \rangle \\
    &= \frac{1}{K^2} \sum_j \mathbb{E}_I[I_j] \langle \phi(x_j), \phi(x_j) \rangle - \sum_j \hat{\boldw}_j^2 \langle \phi(x_j), \phi(x_j) \rangle \\
    &\stackrel{\text{(f)}}{=} \frac{1}{K} \sum_j \hat{\boldw}_j \langle \phi(x_j), \phi(x_j) \rangle - \sum_j \hat{\boldw}_j^2 \langle \phi(x_j), \phi(x_j) \rangle \\
    &= \frac{1}{K} \sum_j \hat{\boldw}_j \|\phi(x_j)\|_\mathcal{H}^2 - \sum_j \hat{\boldw}_j^2 \|\phi(x_j)\|_\mathcal{H}^2 \\
    &\stackrel{\text{(g)}}{\le} \frac{1}{K} \sum_j \hat{\boldw}_j \|\phi(x_j)\|_\mathcal{H}^2 \\
    &\stackrel{\text{(h)}}{\le} \frac{B}{K} \sum_j \hat{\boldw}_j \\
    &\stackrel{\text{(i)}}{=} \frac{B}{K},
  \end{align} where (a) follows from the definition of $\tilde{\boldw}$, (b) follows from the fact that $\frac{1}{K}$ and $\hat{\boldw}$ are deterministic, (c) follows from $\mathbb{E}_I[I_j] = K \hat{\boldw}_j$, (d) follows from the fact that $I^2_j = I_j$ and that $I_j$ and $I_{j'}$ are independent, (e) follows from $\mathbb{E}_I[I_j] = K \hat{\boldw}_j$, (f) follows from $\mathbb{E}_I[I_j] = K \hat{\boldw}_j$, (g) follows from $\hat{\boldw}_j^2 \|\phi(x_j)\|_\mathcal{H}^2 \ge 0$, (h) follows from Assumption \ref{assumption: bounded_kernel}. i.e., $\|\phi(x_j)\|_\mathcal{H}^2 \le B$, and (i) follows from $\sum_j \hat{\boldw}_j = 1$. From Markov's inequality, we have \begin{align}
    \mathbb{P}\left[\left\|\sum_{j = 1}^{2m} \tilde{\boldw}_j \phi(x_j) - \sum_{j = 1}^{2m} \hat{\boldw}_j \phi(x_j)\right\|_\mathcal{H} \ge \sqrt{\frac{B}{\delta K}}\right] \le \delta.
  \end{align}
\end{proof}

\section{Proof of Proposition \ref{prop: mmd_postprocessing}} \label{sec: proof_mmd_postprocessing}

\begin{proposition}
  Let $\tilde{\boldw}'_j = \frac{1}{K}$ if the $j$-th item is in the final output $\mathcal{D}_T$ and $0$ otherwise. We have \begin{align}
    \left\|\sum_{j = 1}^{2m} \tilde{\boldw}'_j \phi(x_j) - \sum_{j = 1}^{2m} \tilde{\boldw}_j \phi(x_j)\right\|_\mathcal{H} \le \sqrt{\frac{B}{\delta K}}
  \end{align} with probability at least $1 - \delta$.
\end{proposition}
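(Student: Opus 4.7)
The plan is to bound the difference $\tilde{\boldw}' - \tilde{\boldw}$ coordinate-by-coordinate and then translate a count of the modified coordinates into an RKHS-norm bound using the bounded kernel assumption. First I would observe that the postprocessing step only inserts or removes items; since both $\tilde{\boldw}_j$ and $\tilde{\boldw}'_j$ take values in $\{0, 1/K\}$, the nonzero entries of $\tilde{\boldw}'_j - \tilde{\boldw}_j$ are exactly the items whose status changed, and each such entry is $\pm 1/K$. If we let $N = \bigl|\sum_{j=1}^{2m} I_j - K\bigr|$ be the surplus/deficit, then the number of modified entries is exactly $N$ (since the greedy step touches exactly as many items as needed to restore the count $K$).

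Next I would apply the triangle inequality in $\mathcal{H}$ together with Assumption \ref{assumption: bounded_kernel}:
\begin{align}
\left\|\sum_{j=1}^{2m} (\tilde{\boldw}'_j - \tilde{\boldw}_j) \phi(x_j)\right\|_\mathcal{H} \le \sum_{j=1}^{2m} |\tilde{\boldw}'_j - \tilde{\boldw}_j| \cdot \|\phi(x_j)\|_\mathcal{H} \le \frac{N}{K} \sqrt{B}.
\end{align}
So the task reduces to a tail bound on $N$.

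Now I would bound $N$ probabilistically. From Eq.~\ref{eq: exp_num_items} and Eq.~\ref{eq: var_num_items}, $\sum_j I_j$ has mean $K$ and variance at most $K$, hence $\mathbb{E}[N^2] = \text{Var}[\sum_j I_j] \le K$. Applying Markov's inequality to $N^2$ gives $\mathbb{P}\bigl[N \ge \sqrt{K/\delta}\bigr] \le \delta$. On this good event,
\begin{align}
\left\|\sum_{j=1}^{2m} \tilde{\boldw}'_j \phi(x_j) - \sum_{j=1}^{2m} \tilde{\boldw}_j \phi(x_j)\right\|_\mathcal{H} \le \frac{\sqrt{K/\delta}}{K}\sqrt{B} = \sqrt{\frac{B}{\delta K}},
\end{align}
which is the desired bound.

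The only mildly delicate point is justifying that the greedy step modifies exactly $N$ items rather than more. This is immediate from the description in Section \ref{sec: optimization}: if $|\hat{\mathcal{D}}_T| < K$ we only insert $K - |\hat{\mathcal{D}}_T|$ items and if $|\hat{\mathcal{D}}_T| > K$ we only remove $|\hat{\mathcal{D}}_T| - K$ items, so the symmetric difference between $\tilde{\boldw}$ and $\tilde{\boldw}'$ has cardinality exactly $N$. No concentration inequality beyond Markov's is needed, mirroring the structure of the proof of Proposition \ref{prop: mmd_random_selection}.
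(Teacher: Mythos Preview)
Your proposal is correct and follows essentially the same route as the paper: bound the RKHS norm by $\frac{\sqrt{B}}{K}\bigl|K-\sum_j I_j\bigr|$ via the triangle inequality and Assumption~\ref{assumption: bounded_kernel}, then control the deviation using the mean and variance from Eqs.~\ref{eq: exp_num_items}--\ref{eq: var_num_items}. The paper phrases the last step as Chebyshev's inequality while you phrase it as Markov on $N^2$, but since $\mathbb{E}[N^2]=\mathrm{Var}\bigl[\sum_j I_j\bigr]$ these are the same argument.
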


\begin{proof}
  \begin{align}
    \left\|\sum_{j = 1}^{2m} \tilde{\boldw}'_j \phi(x_j) - \sum_{j = 1}^{2m} \tilde{\boldw}_j \phi(x_j)\right\|_\mathcal{H}
    &= \left\|\sum_{j = 1}^{2m} \left(\tilde{\boldw}'_j - \tilde{\boldw}_j\right) \phi(x_j)\right\|_\mathcal{H} \\
    &\stackrel{\text{(a)}}{\le} \sum_{j = 1}^{2m} \left\|\left(\tilde{\boldw}'_j - \tilde{\boldw}_j\right) \phi(x_j)\right\|_\mathcal{H} \\
    &= \sum_{j = 1}^{2m} |\tilde{\boldw}'_j - \tilde{\boldw}_j| \|\phi(x_j)\|_\mathcal{H} \\
    &\stackrel{\text{(b)}}{\le} \sqrt{B} \sum_{j = 1}^{2m} |\tilde{\boldw}'_j - \tilde{\boldw}_j| \\
    &= \sqrt{B} \sum_{j = 1}^{2m} \left|\frac{1}{K} 1[j \text{ is in the final output}] - \frac{1}{K} I_j\right| \\
    &= \frac{\sqrt{B}}{K} \sum_{j = 1}^{2m} |1[j \text{ is in the final output}] - I_j| \\
    &\stackrel{\text{(c)}}{=} \frac{\sqrt{B}}{K} \left|K - \sum_{j = 1}^{2m} I_j\right|, \label{eq: postprocessing}
  \end{align} where (a) follows from the triangle inequality, (b) follows from Assumption \ref{assumption: bounded_kernel}, and (c) follows from the fact that the non-zero elements of $\tilde{\boldw}'$ and $\tilde{\boldw}$ differ only when the corresponding items are included or removed in the postprocessing step, and the items are included or removed until the number of selected items is exactly $K$. From Eq. \ref{eq: exp_num_items}, $\mathbb{E}_I\left[\sum_{j = 1}^{2m} I_j\right] = K$, and from Eq. \ref{eq: var_num_items}, $\text{Var}_I\left[\sum_{j = 1}^{2m} I_j\right] \le K$. By Chebyshev's inequality to the right hand side of Eq. \ref{eq: postprocessing}, we have \begin{align}
    \mathbb{P}\left[\left\|\sum_{j = 1}^{2m} \tilde{\boldw}'_j \phi(x_j) - \sum_{j = 1}^{2m} \tilde{\boldw}_j \phi(x_j)\right\|_\mathcal{H} \ge \sqrt{\frac{B}{\delta K}}\right] \le \delta.
  \end{align}
\end{proof}

\section{Proof of Theorem \ref{thm: mmd_sigle_shot}} \label{sec: proof_mmd_single_shot}

\begin{theorem}
  Under assumption \ref{assumption: bounded_kernel}, when we run the Frank-Wolfe algorithm with the step size $\frac{2}{t + 2}$ for $t = 0, 1, \ldots, L - 1$ and select items with the probability $K \boldw_j$, we obtain $\mathcal{D}_T$ such that \begin{align}
    \text{MMD}(\mu_T^{\mathcal{D}_T}, \mu_S) \le \text{OPT}^{\text{combinatorial}} + C L^{-1/2} + 2\sqrt{\frac{B}{\delta K}}
  \end{align} with probability at least $1 - 2\delta$.
\end{theorem}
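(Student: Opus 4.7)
The plan is to express $\mathrm{MMD}$ as the RKHS norm of the difference of mean embeddings, then decompose the error between $\mu_T^{\mathcal{D}_T}$ and $\mu_S$ into three stages corresponding to (i) Frank-Wolfe optimization error, (ii) randomized Bernoulli rounding error, and (iii) postprocessing adjustment error, and bound each one by invoking the already-established propositions.

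Concretely, I would start from the identity $\mathrm{MMD}(\mu_T^{\mathcal{D}_T}, \mu_S) = \|\sum_j \tilde{\boldw}'_j \phi(x_j) - \int \phi \, d\mu_S\|_{\mathcal{H}}$ and insert the intermediate embeddings $\sum_j \tilde{\boldw}_j \phi(x_j)$ (after random selection) and $\sum_j \hat{\boldw}_j \phi(x_j)$ (after Frank-Wolfe). Applying the triangle inequality in $\mathcal{H}$ splits the total discrepancy into exactly three pieces matching Propositions \ref{prop: mmd_optimization}, \ref{prop: mmd_random_selection}, and \ref{prop: mmd_postprocessing}. The Frank-Wolfe piece equals $\mathrm{MMD}(\mu_T^{\hat{\boldw}}, \mu_S)$, which Proposition \ref{prop: mmd_optimization} bounds by $\mathrm{OPT}^{\text{continuous}} + C L^{-1/2}$; and since the uniform weight vector $(1/K, \ldots, 1/K)$ on any size-$K$ subset is feasible for the continuous program, $\mathrm{OPT}^{\text{continuous}} \le \mathrm{OPT}^{\text{combinatorial}}$. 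The random selection and postprocessing pieces each contribute $\sqrt{B/(\delta K)}$ with probability at least $1-\delta$ by the corresponding propositions.

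To conclude, I would apply a union bound over the two high-probability events (random selection and postprocessing) so that both bounds hold simultaneously with probability at least $1 - 2\delta$; the Frank-Wolfe bound is deterministic given the fixed number of iterations. Summing yields
\begin{align}
\mathrm{MMD}(\mu_T^{\mathcal{D}_T}, \mu_S) \le \mathrm{OPT}^{\text{combinatorial}} + C L^{-1/2} + 2\sqrt{\frac{B}{\delta K}},
\end{align}
which matches the statement.

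The proof is essentially mechanical once the RKHS-norm viewpoint of MMD is exploited, because this turns MMD into a genuine metric on mean embeddings and lets the triangle inequality do all the work. The only subtle point worth writing out carefully is the inequality $\mathrm{OPT}^{\text{continuous}} \le \mathrm{OPT}^{\text{combinatorial}}$, which relies on the fact that the feasible set of the continuous relaxation contains the scaled indicator vectors of every size-$K$ subset; this is exactly where the box constraint $\boldw_j \le 1/K$ becomes essential. No genuine obstacle is expected, since the three technical propositions already encapsulate the nontrivial probabilistic and approximation-theoretic arguments.
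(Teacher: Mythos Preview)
Your proposal is correct and matches the paper's own proof essentially line for line: the paper writes $\mathrm{MMD}(\mu_T^{\mathcal{D}_T}, \mu_S)$ as the RKHS norm of the embedding difference, applies the triangle inequality to split into the three stages (postprocessing, random selection, Frank-Wolfe), invokes Propositions~\ref{prop: mmd_optimization}, \ref{prop: mmd_random_selection}, and \ref{prop: mmd_postprocessing}, and then takes a union bound over the two probabilistic events. Your explicit remark that $\mathrm{OPT}^{\text{continuous}} \le \mathrm{OPT}^{\text{combinatorial}}$ via feasibility of the uniform indicator vector is exactly the point the paper uses (established earlier in Section~\ref{sec: optimization}) when passing from Proposition~\ref{prop: mmd_optimization} to the combinatorial optimum.
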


\begin{proof}
  \begin{align}
    &\text{MMD}(\mu_T^{\mathcal{D}_T}, \mu_S) \\
    &= \left\|\sum_{j = 1}^{2m} \tilde{\boldw}'_j \phi(x_j) - \int \phi(x) \, d\mu_S(x)\right\|_\mathcal{H} \\
    &\stackrel{\text{(a)}}{\le} \left\|\sum_{j = 1}^{2m} \tilde{\boldw}'_j \phi(x_j) - \sum_{j = 1}^{2m} \tilde{\boldw}_j \phi(x_j)\right\|_\mathcal{H} + \left\|\sum_{j = 1}^{2m} \tilde{\boldw}_j \phi(x_j) - \sum_{j = 1}^{2m} \hat{\boldw}_j \phi(x_j)\right\|_\mathcal{H} + \left\|\sum_{j = 1}^{2m} \hat{\boldw}_j \phi(x_j) - \int \phi(x) \, d\mu_S(x)\right\|_\mathcal{H} \\
    &= \left\|\sum_{j = 1}^{2m} \tilde{\boldw}'_j \phi(x_j) - \sum_{j = 1}^{2m} \tilde{\boldw}_j \phi(x_j)\right\|_\mathcal{H} + \left\|\sum_{j = 1}^{2m} \tilde{\boldw}_j \phi(x_j) - \sum_{j = 1}^{2m} \hat{\boldw}_j \phi(x_j)\right\|_\mathcal{H} + \text{MMD}(\mu_T^{\hat{\boldw}}, \mu_S) \\
    &\stackrel{\text{(b)}}{\le} \left\|\sum_{j = 1}^{2m} \tilde{\boldw}'_j \phi(x_j) - \sum_{j = 1}^{2m} \tilde{\boldw}_j \phi(x_j)\right\|_\mathcal{H} + \left\|\sum_{j = 1}^{2m} \tilde{\boldw}_j \phi(x_j) - \sum_{j = 1}^{2m} \hat{\boldw}_j \phi(x_j)\right\|_\mathcal{H} + \text{OPT}^{\text{combinatorial}} + C L^{-1/2} \\
    &\stackrel{\text{(c)}}{\le} \sqrt{\frac{B}{\delta K}} + \sqrt{\frac{B}{\delta K}} + \text{OPT}^{\text{combinatorial}} + C L^{-1/2} \\
    &= \text{OPT}^{\text{combinatorial}} + C L^{-1/2} + 2\sqrt{\frac{B}{\delta K}},
  \end{align} where (a) follows from the triangle inequality, (b) follows from Proposition \ref{prop: mmd_optimization}, (c) follows from Propositions \ref{prop: mmd_random_selection} and \ref{prop: mmd_postprocessing} and the union bound with probability at least $1 - 2\delta$.
\end{proof}

\section{Proof of Theorem \ref{thm: mmd_opt_combinatorial}} \label{sec: proof_mmd_opt_combinatorial}

\begin{theorem}
  Under Assumptions \ref{assumption: bounded_kernel} and \ref{assumption: bounded_density_ratio}, there exists $C_1, C_2 \in \mathbb{R}_+$ such that when $K \le C_1 \frac{|J_T|}{r^*}$, \begin{align}
    \text{MMD}(P, \mu^{\mathcal{D}_T}) \le C_2 \left(2 \sqrt{\frac{B}{|\mathcal{D}_S|}} + \sqrt{\frac{B}{K}}\right) \quad \text{and} \quad \text{OPT}^{\text{combinatorial}} \le C_2 \left(\sqrt{\frac{B}{|\mathcal{D}_S|}} + \sqrt{\frac{B}{K}}\right)
  \end{align} with high probability.
\end{theorem}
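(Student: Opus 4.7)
The plan is to exhibit an explicit feasible combinatorial solution $\mathcal{D}_T^0$ whose empirical measure is close to $P$ in MMD, and then deduce both stated bounds by triangle inequalities against $\mu_S$. Because $r^* = \sup_x P(x)/Q(x) < \infty$, we have $P \le r^* Q$ pointwise, so $Q$ admits the mixture representation $Q = \frac{1}{r^*} P + (1 - \frac{1}{r^*}) Q'$ with $Q' := (r^* Q - P)/(r^* - 1)$ a bona fide density. I couple each i.i.d.\ draw in $J_T$ from $Q$ with a latent $Z_j \sim \text{Bernoulli}(1/r^*)$ so that conditional on $Z_j = 1$ the $j$-th item is distributed as $P$. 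Writing $N = \sum_j Z_j \sim \text{Binomial}(|J_T|, 1/r^*)$, a Chernoff bound gives $N \ge |J_T|/(2 r^*)$ with high probability; choosing $C_1 = 1/2$, the hypothesis $K \le C_1 |J_T|/r^*$ then forces $N \ge K$ on this event.

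On this event I pick any $K$ indices $j$ with $Z_j = 1$ and let $\mathcal{D}_T^0 \subseteq J_T$ denote the corresponding items; this is feasible for the combinatorial problem, and conditionally on the coupling its items are i.i.d.\ from $P$. Under Assumption \ref{assumption: bounded_kernel}, the classical MMD bound for empirical measures of $n$ i.i.d.\ samples, $\mathbb{E}\|\widehat{\mu}_n - \mu_P\|_{\mathcal H}^2 \le B/n$, combined with Markov (or McDiarmid for a tighter constant), yields $\text{MMD}(P,\mu^{\mathcal{D}_T^0}) \le c \sqrt{B/K}$ and $\text{MMD}(P,\mu_S) \le c \sqrt{B/|\mathcal{D}_S|}$, each with high probability, for an absolute constant $c$.

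Since $\mathcal{D}_T^0$ is feasible, the triangle inequality gives $\text{OPT}^{\text{combinatorial}} \le \text{MMD}(\mu_S, \mu^{\mathcal{D}_T^0}) \le \text{MMD}(\mu_S, P) + \text{MMD}(P, \mu^{\mathcal{D}_T^0}) \le C_2(\sqrt{B/|\mathcal{D}_S|} + \sqrt{B/K})$, and for any combinatorial optimizer $\mathcal{D}_T^\ast$ (the theorem's unnamed $\mathcal{D}_T$) another triangle inequality gives $\text{MMD}(P, \mu^{\mathcal{D}_T^\ast}) \le \text{MMD}(P, \mu_S) + \text{OPT}^{\text{combinatorial}} \le C_2(2\sqrt{B/|\mathcal{D}_S|} + \sqrt{B/K})$, after absorbing constants into $C_2$ and taking a union bound over the three high-probability events (Chernoff on $N$ together with the two MMD concentration events). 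The only subtle point is the coupling step: I must verify that conditional on which target indices satisfy $Z_j = 1$, the selected items in $\mathcal{D}_T^0$ are genuinely an i.i.d.\ sample from $P$, so that textbook MMD empirical-measure concentration applies even though $\mathcal{D}_T^0$ was extracted from a $Q$-sampled pool. Everything else reduces to Chernoff, triangle inequalities, and standard MMD concentration.
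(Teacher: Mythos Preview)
Your proposal is correct and follows essentially the same approach as the paper: the mixture decomposition $Q = \tfrac{1}{r^*}P + (1-\tfrac{1}{r^*})Q'$, the latent-coin coupling, extraction of $K$ i.i.d.\ $P$-samples from $J_T$ as a feasible solution, standard MMD concentration, and triangle inequalities. Your version is slightly cleaner in two respects: you use a Chernoff bound for the lower tail of $N$ (the paper invokes ``Markov's inequality,'' which does not directly yield a lower bound), and you explicitly derive both inequalities in the statement, including the extra triangle step for $\text{MMD}(P,\mu^{\mathcal{D}_T^\ast})$, which the paper leaves implicit.
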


\begin{proof}
  Under Assumption \ref{assumption: bounded_density_ratio}, $\frac{1}{r^*} P \le Q$ and $Q$ can be written as $\frac{1}{r^*} P + (1 - \frac{1}{r^*}) Q'$ for some distribution $Q'$. Let $Z_i = 1$ with probability $\frac{1}{r^*}$ and $Z_i = 0$ with probability $1 - \frac{1}{r^*}$, and let sample $X_i$ from $P$ if $Z_i = 1$ and $Q'$ if $Z_i = 0$ for $i = 1, \ldots, |J_T|$. $X_i$ follows distribution $Q$. We identity $\{X_1, \ldots, X_{|J_T|}\}$ with $J_T$. Let $J_1 = \{X_i \mid Z_i = 1\}$. $\mathbb{E}\left[|J_1| = \frac{|J_T|}{r^*}\right]$ and $|J_1| \ge \frac{C_1 |J_T|}{r^*}$ for some constant $C_1$ with high probability by Markov's inequality. We assume this holds in the following. Suppose that $K \le C_1 \frac{|J_T|}{r^*}$, then $K \le |J_1|$. Let $J_2$ be the first $K$ samples from $J_1$. Elements in $J_2$ are independent and follow the distribution $P$. As the empirical distribution converges to the true distribution with rate $O(\sqrt{\frac{B}{K}})$ in MMD \cite{gretton2012kernel}, we have \begin{align}
    \text{MMD}(P, J_2) &\le C_2 \sqrt{\frac{B}{K}} \quad \text{and} \\
    \text{MMD}(P, \mathcal{D}_S) &\le C_2 \sqrt{\frac{B}{|\mathcal{D}_S|}},
  \end{align} for some constant $C_2$ with high probability. By the triangle inequality, we have \begin{align}
    \text{MMD}(J_2, \mathcal{D}_S) &\le \text{MMD}(P, J_2) + \text{MMD}(P, \mathcal{D}_S) \\
    &\le C_2 \left(\sqrt{\frac{B}{|\mathcal{D}_S|}} + \sqrt{\frac{B}{K}}\right).
  \end{align} As $J_2 \subseteq J_T$ and $|J_2| = K$, $J_2$ is a feasible solution of the combinatorial problem, and the optimal $\mathcal{D}_T \subseteq J_T$ has no worse MMD than $J_2$, and we have the desired guarantee.
\end{proof}

\section{Proof of Lemma \ref{lem: covering}} \label{sec: proof_covering}

\begin{lemma}
  Let $\mathcal{F}_\text{Lip}$ be the set of $1$-Lipschitz functions with $f(\frac{\bold1}{2}) = 0$. $\mathcal{F}_\text{Lip}$ can be $L_\infty$-covered by $\tilde{\mathcal{F}}_\varepsilon$,  i.e., for any $f \in \mathcal{F}_\text{Lip}$, there exists $\tilde{f} \in \tilde{\mathcal{F}}_\varepsilon$ such that $\|f - \tilde{f}\|_\infty \le \varepsilon$, with $|f(x)| \le \frac{\sqrt{d}}{2}, \forall f \in \tilde{\mathcal{F}}_\varepsilon$ and $|\tilde{\mathcal{F}}_\varepsilon| \le \exp\left(\log (3) \left(\frac{2 \sqrt{d}}{\varepsilon}\right)^d\right)$.
\end{lemma}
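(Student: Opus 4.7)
The plan is to construct $\tilde{\mathcal{F}}_\varepsilon$ explicitly as piecewise-constant functions on a regular grid of $[0,1]^d$, with values quantized to integer multiples of a carefully chosen step. I will exploit the Lipschitz constraint twice: once to bound the approximation error, and once to limit the number of admissible quantized labellings per grid point.

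\textbf{Setup.} Fix grid spacing $h$ with $h \le \varepsilon/(2\sqrt{d})$ and let $G \subset [0,1]^d$ denote the regular grid of side-length $h$, so $|G| \le (2\sqrt{d}/\varepsilon)^d$. For any $x \in [0,1]^d$, let $g(x) \in G$ be its nearest grid point; then $\|x - g(x)\| \le h\sqrt{d}/2$. Set the quantization step $\eta := 2h$. For each $f \in \mathcal{F}_{\text{Lip}}$, define integer labels $n_g(f) := \mathrm{round}(f(g)/\eta)$ for $g \in G$, set $\tilde{f}(g) := \eta \cdot n_g(f)$ clipped to $[-\sqrt{d}/2,\sqrt{d}/2]$, and extend to all of $[0,1]^d$ by the nearest-neighbor rule $\tilde{f}(x) := \tilde{f}(g(x))$.

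\textbf{Approximation error and value bound.} By the triangle inequality and Lipschitzness,
\begin{equation*}
|\tilde{f}(x) - f(x)| \le |\tilde{f}(g(x)) - f(g(x))| + |f(g(x)) - f(x)| \le \eta/2 + h\sqrt{d}/2 = h(1 + \sqrt{d}/2) \le \varepsilon,
\end{equation*}
where the last inequality follows from $h \le \varepsilon/(2\sqrt{d})$ whenever $d \ge 1$. Clipping only improves the error since $|f(x)| \le \|x - \mathbf{1}/2\| \le \sqrt{d}/2$ on $[0,1]^d$, and it simultaneously enforces the uniform bound $|\tilde{f}(x)| \le \sqrt{d}/2$ claimed in the lemma.

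\textbf{Counting via Lipschitz.} Pick any spanning tree $T$ of the grid graph on $G$, rooted at some $g_0 \in G$. For any axis-aligned tree edge $(g, g')$ we have $|f(g) - f(g')| \le h$ by Lipschitzness, hence
\begin{equation*}
|n_g(f) - n_{g'}(f)| \le \tfrac{1}{2} + h/\eta + \tfrac{1}{2} = \tfrac{3}{2},
\end{equation*}
and since this quantity is integer-valued, $n_{g'}(f) - n_g(f) \in \{-1, 0, 1\}$: exactly three choices per non-root grid point. Fixing the root label produces at most $3^{|G|-1}$ labellings; allowing the root to range over the $O(\sqrt{d}/\eta)$ admissible integers adds at most a polynomial factor in $\sqrt{d}/\varepsilon$, which is easily absorbed into $3^{|G|}$ because $|G|$ grows like $(2\sqrt{d}/\varepsilon)^d$. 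Combining,
\begin{equation*}
|\tilde{\mathcal{F}}_\varepsilon| \le 3^{|G|} \le \exp\bigl(\log(3)\,(2\sqrt{d}/\varepsilon)^d\bigr).
\end{equation*}

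The main obstacle is the joint calibration of $h$ and $\eta$: one must keep the total error $\eta/2 + h\sqrt{d}/2$ under $\varepsilon$ while simultaneously ensuring adjacent quantized labels differ by at most one integer step. The choice $\eta = 2h$ is the unique balance that forces exactly three (rather than five) choices per Lipschitz step, and it is precisely this that yields the constant $\log 3$ in the stated bound. A minor bookkeeping point is the absorption of the root-label factor into $3^{|G|}$, which is immediate given the super-polynomial size of $|G|$.
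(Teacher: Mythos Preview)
Your construction is essentially the paper's: partition $[0,1]^d$ by a regular grid, quantize function values on cells to a discrete lattice, and bound the number of labellings using the fact that adjacent cells' rounded values differ by at most one quantum (hence three choices per edge of a spanning tree). Two bookkeeping slips should be repaired.

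First, the inequality on $|G|$ runs the wrong way: from $h\le\varepsilon/(2\sqrt d)$ you obtain $1/h\ge 2\sqrt d/\varepsilon$, hence $|G|\ge(2\sqrt d/\varepsilon)^d$, not $\le$. The fix is to take $1/h=\lfloor 2\sqrt d/\varepsilon\rfloor$; this still yields the error bound because $h(1+\sqrt d/2)\le\varepsilon$ only requires $h\le 2\varepsilon/(2+\sqrt d)$, which is looser than $\varepsilon/(2\sqrt d)$ for all $d\ge 1$.

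Second, the absorption of the root-label factor does not go through as written: if $|G|=(2\sqrt d/\varepsilon)^d$ there is no slack in $3^{|G|}$ to swallow an extra polynomial in $\sqrt d/\varepsilon$. The paper avoids this by rooting at the cell containing $\mathbf 1/2$ and invoking the hypothesis $f(\mathbf 1/2)=0$; because their quantization step equals the cell \emph{diameter}, the rounded value at that cell is forced to be exactly $0$, eliminating the root degree of freedom entirely and yielding $3^{|G|-1}$ on the nose. You should anchor the root the same way rather than try to absorb it.
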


\begin{proof}
  We devide the domain $[0, 1]^d$ into $\lceil\frac{\sqrt{d}}{\varepsilon}\rceil^d$ hypercubes with diameter at most $\varepsilon$. For each hypercube $\mathcal{C}$ and $f \in \mathcal{F}_\text{Lip}$, let $f_\mathcal{C} = \frac{\max_{x \in \mathcal{C}} f(x) + \min_{x \in \mathcal{C}} f(x)}{2}$ and $\tilde{f}_\mathcal{C}$ be its nearest point in $\{k\varepsilon \mid k \in \mathbb{Z}\}$. We then define $\tilde{f}(x) = \tilde{f}_\mathcal{C}$ for $x \in \mathcal{C}$. We have $|f(x) - \tilde{f}(x)| \le \varepsilon$ since the diameter of $\mathcal{C}$ is at most $\varepsilon$ and $f$ is $1$-Lipschitz. We count the number of possible $\tilde{f}$. For the hypercube $\mathcal{C}_{\text{center}}$ with $\frac{\bold1}{2} \in \mathcal{C}_{\text{center}}$, $\tilde{f}_{\mathcal{C}_{\text{center}}} = 0$ since $f(\frac{\bold1}{2}) = 0$. Given the value of $\tilde{f}_\mathcal{C}$, possible values of an adjacent hypercube $\mathcal{C}'$ are $\tilde{f}_\mathcal{C} - \varepsilon, \tilde{f}_\mathcal{C}, \tilde{f}_\mathcal{C} + \varepsilon$. We can reach all the hypercubes by repeating this process. Therefore, the possible values of $\tilde{f}$ are at most $3^{s - 1}$, where $s = \lceil\frac{\sqrt{d}}{\varepsilon}\rceil^d$ is the number of hypercubes. When $\frac{\sqrt{d}}{\varepsilon} \ge 1$, we have $\lceil\frac{\sqrt{d}}{\varepsilon}\rceil \le \frac{\sqrt{d}}{\varepsilon} + 1 \le \frac{2\sqrt{d}}{\varepsilon}$, and $|\tilde{\mathcal{F}}| \le \exp\left(\log (3) \left(\frac{2 \sqrt{d}}{\varepsilon}\right)^d\right)$. When $\frac{\sqrt{d}}{\varepsilon} < 1$, we have $|\tilde{\mathcal{F}}| = 1$, in which case the lemma holds trivially.
\end{proof}

\section{Proof of Proposition \ref{prop: wasserstein_postprocessing}} \label{sec: proof_wasserstein_postprocessing}

\begin{proposition}
  Let $\tilde{\boldw}'_j = \frac{1}{K}$ if the $j$-th item is in the final output $\mathcal{D}_T$ and $0$ otherwise. For any $\varepsilon > 0$, \begin{align}
    \sup_{f \in \tilde{\mathcal{F}}_\varepsilon} \sum_{j = 1}^{2m} f(x_j) (\tilde{\boldw}'_j - \tilde{\boldw}_j) \le \sqrt{\frac{d}{2K} \log \frac{1}{\delta}} + \frac{\sqrt{d}}{3K} \log \frac{1}{\delta}
  \end{align} with probability at least $1 - \delta$.
\end{proposition}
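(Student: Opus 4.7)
The plan is to reduce the supremum over $\tilde{\mathcal{F}}_\varepsilon$ to a deterministic multiple of the scalar deviation $|K - \sum_j I_j|$, and then apply Bernstein's inequality to that sum of independent Bernoullis.

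First I would invoke the uniform bound $|f(x)| \le \sqrt{d}/2$ from Lemma~\ref{lem: covering}, together with the observation that the postprocessing step is one-sided: depending on whether $\sum_j I_j$ is below or above $K$, the algorithm either only inserts items or only removes items, never both. Consequently the nonzero entries of $\tilde{\boldw}'_j - \tilde{\boldw}_j$ all share a common sign and have magnitude $1/K$, so
\begin{align}
\sup_{f \in \tilde{\mathcal{F}}_\varepsilon} \sum_{j=1}^{2m} f(x_j)(\tilde{\boldw}'_j - \tilde{\boldw}_j) \le \frac{\sqrt{d}}{2} \sum_{j=1}^{2m} |\tilde{\boldw}'_j - \tilde{\boldw}_j| = \frac{\sqrt{d}}{2K} \left|K - \sum_{j=1}^{2m} I_j\right|.
\end{align}

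Next, since $I_j \sim \text{Bernoulli}(K \hat{\boldw}_j)$ are independent, $\sum_j I_j$ has mean $K$ by Eq.~\ref{eq: exp_num_items} and variance at most $K$ by Eq.~\ref{eq: var_num_items}, and each centered summand is bounded by $1$ in absolute value. A standard two-sided Bernstein inequality then yields
\begin{align}
\left|K - \sum_{j=1}^{2m} I_j\right| \le \sqrt{2K \log \frac{1}{\delta}} + \frac{2}{3} \log \frac{1}{\delta}
\end{align}
with probability at least $1 - \delta$. Multiplying by $\sqrt{d}/(2K)$ reproduces the two terms in the claimed bound exactly.

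The only delicate point is the first reduction: one must argue that the postprocessing is monotone (pure insertion or pure deletion, with no swaps), so that $\sum_j |\tilde{\boldw}'_j - \tilde{\boldw}_j|$ collapses to $|K - \sum_j I_j|/K$ rather than something larger. Once that structural observation is in place, the uniform boundedness of functions in $\tilde{\mathcal{F}}_\varepsilon$ makes the supremum trivial to handle, and the remaining work is a textbook Bernstein concentration argument applied to a single scalar random variable.
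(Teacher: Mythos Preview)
Your proposal is correct and essentially identical to the paper's proof: the paper also bounds the supremum by $\frac{\sqrt{d}}{2}\sum_j|\tilde{\boldw}'_j-\tilde{\boldw}_j|$, uses the one-sided nature of the postprocessing to collapse this to $\frac{\sqrt{d}}{2K}\bigl|K-\sum_j I_j\bigr|$, and then applies Bernstein's inequality with mean $K$ (Eq.~\ref{eq: exp_num_items}) and variance $\le K$ (Eq.~\ref{eq: var_num_items}) to obtain exactly the stated bound. Your identification of the ``delicate point'' about monotone postprocessing corresponds precisely to step (c) in the paper's argument.
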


\begin{proof}
  \begin{align}
    \sup_{f \in \tilde{\mathcal{F}}_\varepsilon} \sum_{j = 1}^{2m} (\tilde{\boldw}'_j - \tilde{\boldw}_j) f(x_j) &\le \|f\|_\infty \sum_{j = 1}^{2m} |\tilde{\boldw}'_j - \tilde{\boldw}_j| \\
    &\le \frac{\sqrt{d}}{2} \sum_{j = 1}^{2m} |\tilde{\boldw}'_j - \tilde{\boldw}_j| \\
    &\le \frac{\sqrt{d}}{2} \sum_{j = 1}^{2m} \left|\frac{1}{K} 1[j \text{ is in the final output}] - \frac{1}{K} I_j\right| \\
    &= \frac{\sqrt{d}}{2K} \sum_{j = 1}^{2m} |1[j \text{ is in the final output}] - I_j| \\
    &= \frac{\sqrt{d}}{2K} \left|K - \sum_{j = 1}^{2m} I_j\right|. \label{eq: wasserstein_postprocessing2}
  \end{align} $I_j$ are independent Bernoulli random variables with $|I_j| \le 1$, and from Eq. \ref{eq: exp_num_items}, $\mathbb{E}_I\left[\sum_{j = 1}^{2m} I_j\right] = K$, and from Eq. \ref{eq: var_num_items}, $\text{Var}_I\left[\sum_{j = 1}^{2m} I_j\right] \le K$. By Bernstein's inequality, we have \begin{align}
    \mathbb{P}\left[\left|K - \sum_{j = 1}^{2m} I_j\right| \ge \sqrt{2K \log \frac{1}{\delta}} + \frac{2}{3} \log \frac{1}{\delta}\right] \le \delta.
  \end{align} By combining it with Eq. \ref{eq: wasserstein_postprocessing2}, we have the desired guarantee.
\end{proof}

\section{Proof of Proposition \ref{prop: wasserstein_random_selection}} \label{sec: proof_wasserstein_random_selection}

\begin{lemma}
  Let $\tilde{\boldw}_j = \frac{I_j}{K}$ with $I_j \sim \text{Bernoulli}(K \hat{\boldw}_j)$. Then, for any $f$ with $|f(x)| \le \frac{\sqrt{d}}{2}$, \begin{align}
    \sum_{j = 1}^{2m} f(x_j) (\tilde{\boldw}_j - \hat{\boldw}_j) \le \sqrt{\frac{d}{2K} \log \frac{1}{\delta}} + \frac{\sqrt{d}}{3K} \log \frac{1}{\delta}
  \end{align} with probability at least $1 - \delta$.
\end{lemma}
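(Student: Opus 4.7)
The plan is to apply Bernstein's inequality to the independent random variables
$Y_j \define f(x_j)(\tilde{\boldw}_j - \hat{\boldw}_j) = \tfrac{f(x_j)}{K}(I_j - K\hat{\boldw}_j)$ for $j = 1, \ldots, 2m$. Since $I_j \sim \text{Bernoulli}(K\hat{\boldw}_j)$ and the $I_j$ are mutually independent, the $Y_j$ are independent and mean-zero (as $\mathbb{E}[I_j] = K\hat{\boldw}_j$).

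First I would bound the almost-sure magnitude $|Y_j|$. Because $\hat{\boldw}_j \in [0, 1/K]$ by construction of the continuous optimization feasible set, we have $|I_j - K\hat{\boldw}_j| \le 1$, and hence $|Y_j| \le \frac{|f(x_j)|}{K} \le \frac{\sqrt{d}}{2K}$ using the hypothesis $|f| \le \sqrt{d}/2$. Next I would bound the total variance: $\text{Var}(Y_j) = \frac{f(x_j)^2}{K^2}\cdot K\hat{\boldw}_j(1 - K\hat{\boldw}_j) \le \frac{f(x_j)^2 \hat{\boldw}_j}{K} \le \frac{d}{4K}\hat{\boldw}_j$, and summing over $j$ together with $\sum_j \hat{\boldw}_j = 1$ yields $\sum_{j=1}^{2m} \text{Var}(Y_j) \le \frac{d}{4K}$.

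With $\sigma^2 \le \frac{d}{4K}$ and $M \le \frac{\sqrt{d}}{2K}$, Bernstein's inequality gives
\begin{align}
\mathbb{P}\!\left[\sum_{j=1}^{2m} Y_j \ge t\right] \le \exp\!\left(-\frac{t^2/2}{\sigma^2 + Mt/3}\right).
\end{align}
Setting the right-hand side equal to $\delta$ and solving (using the standard splitting trick $t^2 \le 2\sigma^2 \log(1/\delta) + \tfrac{2}{3}Mt\log(1/\delta)$, which yields $t \le \sqrt{2\sigma^2 \log(1/\delta)} + \tfrac{2}{3}M\log(1/\delta)$), I obtain
\begin{align}
\sum_{j=1}^{2m} f(x_j)(\tilde{\boldw}_j - \hat{\boldw}_j) \;\le\; \sqrt{\frac{d}{2K}\log\frac{1}{\delta}} + \frac{\sqrt{d}}{3K}\log\frac{1}{\delta}
\end{align}
with probability at least $1 - \delta$, matching the claimed bound exactly.

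The proof is essentially routine; there is no serious obstacle, since the hypothesis $\hat{\boldw}_j \le 1/K$ (built into the feasible set of the continuous relaxation) is precisely what is needed to get both the $O(1/K)$ almost-sure bound on $|Y_j|$ and the variance bound of order $d/K$ (rather than $d$). The only care required is to keep the constants $1/2$ and $1/3$ consistent with the standard Bernstein constants so that the final expression matches $\sqrt{d/(2K)\log(1/\delta)} + (\sqrt{d}/(3K))\log(1/\delta)$.
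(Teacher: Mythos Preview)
Your proposal is correct and essentially identical to the paper's own proof: both bound $|Y_j| \le \frac{\sqrt{d}}{2K}$ via $\hat{\boldw}_j \in [0,1/K]$, compute the total variance $\sum_j \text{Var}(Y_j) \le \frac{d}{4K}$ using $\sum_j \hat{\boldw}_j = 1$, and apply Bernstein's inequality to obtain the stated bound with the same constants.
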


\begin{proof}
   Since $0 \le \tilde{\boldw}_j, \hat{\boldw}_j \le \frac{1}{K}$, we have $|(\tilde{\boldw}_j - \hat{\boldw}_j) f(x_j)| \le \frac{\sqrt{d}}{2K}$. Here, $(\tilde{\boldw}_j - \hat{\boldw}_j) f(x_j)$ are independent random variables with mean $\mathbb{E}_I[(\tilde{\boldw}_j - \hat{\boldw}_j)] f(x_j) = 0$ and variance \begin{align}
    \text{Var}_I[(\tilde{\boldw}_j - \hat{\boldw}_j) f(x_j)] = f(x_j)^2 \text{Var}_I[\tilde{\boldw}_j] = f(x_j)^2 \text{Var}_I\left[\frac{I_j}{K}\right] = \frac{f(x_j)^2}{K^2} \text{Var}_I[I_j] = \frac{f(x_j)^2 K \hat{\boldw}_j (1 - K \hat{\boldw}_j)}{K^2}.
  \end{align} The sum of the variances is \begin{align}
    \sum_{j = 1}^{2m} \frac{f(x_j)^2 K \hat{\boldw}_j (1 - K \hat{\boldw}_j)}{K^2} &= \frac{1}{K} \sum_{j = 1}^{2m} f(x_j)^2 \hat{\boldw}_j (1 - K \hat{\boldw}_j) \\
    &\stackrel{\text{(a)}}{\le} \frac{d}{4K} \sum_{j = 1}^{2m} \hat{\boldw}_j (1 - K \hat{\boldw}_j) \\
    &\stackrel{\text{(b)}}{\le} \frac{d}{4K} \sum_{j = 1}^{2m} \hat{\boldw}_j \\
    &\stackrel{\text{(c)}}{=} \frac{d}{4K},
  \end{align} where (a) follows from $f(x_j) \le \frac{\sqrt{d}}{2}$, (b) follows from the fact that $1 \ge K \hat{\boldw}_j$, and (c) follows from $\sum_j \hat{\boldw}_j = 1$. By Bernstein's inequality, we have \begin{align}
    \mathbb{P}\left[\sum_{j = 1}^{2m} (\tilde{\boldw}_j - \hat{\boldw}_j) f(x_j) \ge \sqrt{\frac{d}{2K} \log \frac{1}{\delta}} + \frac{\sqrt{d}}{3K} \log \frac{1}{\delta}\right] \le \delta.
  \end{align}
\end{proof}

\section{Proof of Theorem \ref{thm: wasserstein1}} \label{sec: proof_wasserstein1}

\begin{theorem}
  For any $\delta > 0$, the $1$-Wasserstein distance between the final output $\mathcal{D}_T$ and the source distribution is \begin{align}
    W_1(\mu_T^{\mathcal{D}_T}, \mu_S) &\le \text{OPT}^{\text{combinatorial}} + 3 \sqrt{\frac{d}{K} \log \frac{1}{\delta}} + \frac{\sqrt{d}}{K} \log \frac{1}{\delta} + 8 \sqrt{d} K^{-\frac{1}{d+2}} + 11 \sqrt{d} K^{-\frac{2}{d+2}} + 6\sqrt{d \log \frac{1}{\delta}} K^{-\frac{1}{d+2}-\frac{1}{2}}
  \end{align} with probability at least $1 - 2\delta$.
\end{theorem}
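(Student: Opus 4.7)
The plan is to apply Kantorovich--Rubinstein duality and decompose the $W_1$ distance along the three stages of \textsc{Pretender}. Both $\mu_T^{\mathcal{D}_T}$ and $\mu_S$ are probability measures, so by duality $W_1(\mu_T^{\mathcal{D}_T}, \mu_S) = \sup_{f \in \mathcal{F}_\text{Lip}} [\sum_j \tilde{\boldw}'_j f(x_j) - \int f \,d\mu_S]$, and WLOG I may restrict to $1$-Lipschitz $f$ with $f(\bold1/2) = 0$, so that $|f(x)| \le \sqrt{d}/2$ on $[0,1]^d$. For any such $f$ I pick its $\varepsilon$-approximant $\tilde f \in \tilde{\mathcal{F}}_\varepsilon$ from Lemma \ref{lem: covering}; since both measures are probability measures, the replacement error is at most $2\varepsilon$. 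Then I telescope through $\tilde{\boldw}', \tilde{\boldw}, \hat{\boldw}, \mu_S$:
\begin{align*}
  \sum_j \tilde{\boldw}'_j \tilde f(x_j) - \int \tilde f d\mu_S = \underbrace{\sum_j (\tilde{\boldw}'_j - \tilde{\boldw}_j) \tilde f(x_j)}_{\text{postprocessing}} + \underbrace{\sum_j (\tilde{\boldw}_j - \hat{\boldw}_j) \tilde f(x_j)}_{\text{random selection}} + \underbrace{\Bigl(\sum_j \hat{\boldw}_j \tilde f(x_j) - \int \tilde f d\mu_S\Bigr)}_{\le\, W_1(\mu_T^{\hat{\boldw}}, \mu_S)}.
\end{align*}

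The last summand is at most $\text{OPT}^{\text{continuous}} \le \text{OPT}^{\text{combinatorial}}$, since the uniform weight $\bold1/K$ is feasible for the continuous program. The postprocessing term is handled by a single application of Proposition \ref{prop: wasserstein_postprocessing}, which is already a uniform bound over all of $\tilde{\mathcal{F}}_\varepsilon$ and gives $\sqrt{\tfrac{d}{2K}\log(1/\delta)} + \tfrac{\sqrt{d}}{3K}\log(1/\delta)$ with probability $\ge 1-\delta$. The random selection term is the core obstacle: Lemma \ref{prop: wasserstein_random_selection} only handles a single test function, so I apply it to each of the $|\tilde{\mathcal{F}}_\varepsilon| \le \exp(\log(3)(2\sqrt{d}/\varepsilon)^d)$ members of the cover at confidence level $\delta/|\tilde{\mathcal{F}}_\varepsilon|$ and union bound, producing
\begin{align*}
  \sup_{\tilde f \in \tilde{\mathcal{F}}_\varepsilon} \sum_j (\tilde{\boldw}_j - \hat{\boldw}_j) \tilde f(x_j) \le \sqrt{\tfrac{d}{2K}\bigl(\log(3)(2\sqrt{d}/\varepsilon)^d + \log(1/\delta)\bigr)} + \tfrac{\sqrt{d}}{3K}\bigl(\log(3)(2\sqrt{d}/\varepsilon)^d + \log(1/\delta)\bigr).
\end{align*}

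The final step is to balance $\varepsilon$ against the $\varepsilon^{-d/2}$ dependence inside the square root: solving $\varepsilon \sim \sqrt{d\log(3)(2\sqrt{d})^d / (K \varepsilon^d)}$ gives the optimal scale $\varepsilon \sim \sqrt{d}\, K^{-1/(d+2)}$, which is the rationale behind the choice $\varepsilon = \sqrt{d}\bigl(2^{(d-2)/(d+2)}\log^{1/(d+2)}(3)K^{-1/(d+2)} + \tfrac{1}{2}\sqrt{(1/K)\log(1/\delta)}\bigr)$ suggested in the sketch. Substituting this back, splitting $\sqrt{a+b} \le \sqrt{a}+\sqrt{b}$, and adding the covering error $2\varepsilon$, the postprocessing bound, and $\text{OPT}^{\text{combinatorial}}$ yields the six explicit summands stated in the theorem, with the overall failure probability at most $2\delta$ by a final union bound. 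The main technical obstacle is purely arithmetical bookkeeping: tracking the $\sqrt{d}$ factors through the $(d+2)$-th-root balancing, and verifying that each of the elementary bounds $\sqrt{a+b}\le\sqrt{a}+\sqrt{b}$, $\lceil s\rceil \le 2s$ (used inside Lemma \ref{lem: covering}), and the union-bound inflation of $\log(1/\delta) \to \log|\tilde{\mathcal{F}}_\varepsilon|/\delta$ contribute exactly the constants $3, 1, 8, 11, 6$ in the claim.
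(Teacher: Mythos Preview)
Your overall strategy is the paper's: cover $\mathcal{F}_\text{Lip}$ by $\tilde{\mathcal{F}}_\varepsilon$, apply Lemma~\ref{prop: wasserstein_random_selection} with a union bound over the cover, add the postprocessing bound from Proposition~\ref{prop: wasserstein_postprocessing}, and balance $\varepsilon$ at the scale $\sqrt{d}\,K^{-1/(d+2)}$. The choice of $\varepsilon$ and the arithmetic route through $\log|\tilde{\mathcal{F}}_\varepsilon|\le \frac{8K}{d}\varepsilon^2$ are exactly what the paper does.

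There is one step that fails as written. You claim
\[
\sum_j \hat{\boldw}_j \tilde f(x_j) - \int \tilde f\, d\mu_S \;\le\; W_1(\mu_T^{\hat{\boldw}}, \mu_S),
\]
but the cover elements $\tilde f\in\tilde{\mathcal{F}}_\varepsilon$ from Lemma~\ref{lem: covering} are piecewise constant on a grid, not $1$-Lipschitz, so they are not admissible test functions for $W_1$. You would need to switch back from $\tilde f$ to $f$ here, at the cost of another $2\varepsilon$; together with your initial $f\to\tilde f$ replacement this gives $4\varepsilon$ of covering error on top of the $2\varepsilon$ coming out of the union-bounded random-selection term, i.e.\ $6\varepsilon$ total, and the displayed constants $3,1,8,11,6$ will not come out. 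The paper avoids this double payment by performing the $\tilde f\to f$ replacement only once, on the signed quantity $\sum_j(\tilde{\boldw}'_j-\hat{\boldw}_j)\tilde f(x_j)$ (cost $\le \varepsilon\|\tilde{\boldw}'-\hat{\boldw}\|_1\le 2\varepsilon$), which yields a bound on $W_1(\mu_T^{\mathcal{D}_T},\mu_T^{\hat{\boldw}})$ directly; the remaining $W_1(\mu_T^{\hat{\boldw}},\mu_S)\le \text{OPT}^{\text{combinatorial}}$ is then added via the triangle inequality in $W_1$, with no further covering error. Reorganising your telescope this way fixes the gap and recovers the stated constants.
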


\begin{proof}
  We use \begin{align}
    \varepsilon = \sqrt{d} \left(2^{\frac{d-2}{d+2}} \log^{\frac{1}{d+2}}(3) K^{-\frac{1}{d+2}} + \frac{1}{2} \sqrt{\frac{1}{K} \log \frac{1}{\delta}}\right)
  \end{align} in the following. Then, \begin{align}
    \frac{4K}{d} \varepsilon^{d+2} &\ge 4K d^{-1} d^{\frac{d + 2}{2}} \left(2^{\frac{d-2}{d+2}} \log^{\frac{1}{d+2}}(3) K^{-\frac{1}{d+2}}\right)^{d+2} \\
    &= d^{\frac{d}{2}} 2^d \log(3) \label{eq: wasserstein1-1}
  \end{align} and thus \begin{align}
    \left(\frac{2 \sqrt{d}}{\varepsilon}\right)^d \log (3) \le \frac{4K}{d} \varepsilon^{2}.
  \end{align} In addition, \begin{align}
    \frac{4K}{d} \varepsilon^2 &\ge 4K \left(\frac{1}{2} \sqrt{\frac{1}{K} \log \frac{1}{\delta}}\right)^2 \\
    &= \log \frac{1}{\delta}. \label{eq: wasserstein1-2}
  \end{align} By combining Eq. \ref{eq: wasserstein1-1} and Eq. \ref{eq: wasserstein1-2}, we have \begin{align}
    \left(\frac{2 \sqrt{d}}{\varepsilon}\right)^d \log (3) + \log \frac{1}{\delta} \le \frac{8K}{d} \varepsilon^2. \label{eq: wasserstein1-3}
  \end{align} By Lemma \ref{lem: covering}, $|\tilde{\mathcal{F}}_\varepsilon| \le \exp\left(\log (3) \left(\frac{2 \sqrt{d}}{\varepsilon}\right)^d\right)$. We have \begin{align}
    \frac{d}{2K} \log \frac{|\tilde{\mathcal{F}_\varepsilon}|}{\delta} &= \frac{d}{2K} (\log |\tilde{\mathcal{F}_\varepsilon}| + \log \frac{1}{\delta}) \\
    &\le \frac{d}{2K} \left(\log (3) \left(\frac{2 \sqrt{d}}{\varepsilon}\right)^d + \log \frac{1}{\delta}\right) \\
    &\stackrel{\text{(a)}}{\le} \frac{d}{2K} \left(\frac{8K}{d} \varepsilon^2\right) \\
    &= 4 \varepsilon^2, \label{eq: wasserstein1-4}
  \end{align} where (a) follows from Eq. \ref{eq: wasserstein1-3}. In addition, we have \begin{align}
    \frac{\sqrt{d}}{3K} \log \frac{|\tilde{\mathcal{F}_\varepsilon}|}{\delta} &= \frac{\sqrt{d}}{3K} (\log |\tilde{\mathcal{F}_\varepsilon}| + \log \frac{1}{\delta}) \\
    &\le \frac{\sqrt{d}}{3K} \left(\log (3) \left(\frac{2 \sqrt{d}}{\varepsilon}\right)^d + \log \frac{1}{\delta}\right) \\
    &\stackrel{\text{(a)}}{\le} \frac{\sqrt{d}}{3K} \left(\frac{8K}{d} \varepsilon^2\right) \\
    &= \frac{8}{3\sqrt{d}} \varepsilon^2, \label{eq: wasserstein1-5}
  \end{align} where (a) follows from Eq. \ref{eq: wasserstein1-3}. We set $\delta \leftarrow \delta / |\tilde{\mathcal{F}}|$ in Proposition \ref{prop: wasserstein_random_selection} and apply the union bound, then, we have \begin{align}
    \sum_{j = 1}^{2m} (\tilde{\boldw}_j - \hat{\boldw}_j) \tilde{f}(x_j) &\le \sqrt{\frac{d}{2K} \log \frac{|\tilde{\mathcal{F}|}}{\delta}} + \frac{\sqrt{d}}{3K} \log \frac{|\tilde{\mathcal{F}|}}{\delta} \\
    &\stackrel{\text{(a)}}{\le} 2 \varepsilon + \frac{8}{3\sqrt{d}} \varepsilon^2 \label{eq: wasserstein1-6}
  \end{align} for all $\tilde{f} \in \tilde{\mathcal{F}}$ with probability at least $1 - \delta$, where (a) follows from Eq. \ref{eq: wasserstein1-4} and Eq. \ref{eq: wasserstein1-5}. In the following, we assume that Eqs. \ref{eq: wasserstein_postprocessing} and \ref{eq: wasserstein1-6} hold, which is true with probability at least $1 - 2 \delta$. Then, we have \begin{align}
    \sum_{j = 1}^{2m} (\tilde{\boldw}'_j - \hat{\boldw}_j) \tilde{f}(x_j) &= \sum_{j = 1}^{2m} (\tilde{\boldw}'_j - \tilde{\boldw}_j) \tilde{f}(x_j) + \sum_{j = 1}^{2m} (\tilde{\boldw}_j - \hat{\boldw}_j) \tilde{f}(x_j) \\
    &\le \sqrt{\frac{d}{2K} \log \frac{1}{\delta}} + \frac{\sqrt{d}}{3K} \log \frac{1}{\delta} + 2 \varepsilon + \frac{8}{3\sqrt{d}} \varepsilon^2.
  \end{align} For any $f \in \mathcal{F}_\text{Lip}$, there exists $\tilde{f} \in \tilde{\mathcal{F}}$ such that $\|f - \tilde{f}\|_\infty \le \varepsilon$, and we have \begin{align}
    &\sum_{j = 1}^{2m} (\tilde{\boldw}'_j - \hat{\boldw}_j) f(x_j) \\
    &= \sum_{j = 1}^{2m} (\tilde{\boldw}'_j - \hat{\boldw}_j) \tilde{f}(x_j) + \sum_{j = 1}^{2m} (\tilde{\boldw}'_j - \hat{\boldw}_j) (f(x_j) - \tilde{f}(x_j)) \\
    &\stackrel{\text{(a)}}{\le} \sum_{j = 1}^{2m} (\tilde{\boldw}'_j - \hat{\boldw}_j) \tilde{f}(x_j) + \varepsilon \sum_{j = 1}^{2m} |\tilde{\boldw}'_j - \hat{\boldw}_j| \\
    &\stackrel{\text{(b)}}{\le} \sum_{j = 1}^{2m} (\tilde{\boldw}'_j - \hat{\boldw}_j) \tilde{f}(x_j) + 2\varepsilon \\
    &\le \sqrt{\frac{d}{2K} \log \frac{1}{\delta}} + \frac{\sqrt{d}}{3K} \log \frac{1}{\delta} + 2 \varepsilon + \frac{8}{3\sqrt{d}} \varepsilon^2 + 2\varepsilon \\
    &\le \sqrt{\frac{d}{2K} \log \frac{1}{\delta}} + \frac{\sqrt{d}}{3K} \log \frac{1}{\delta} + 4 \varepsilon + \frac{8}{3\sqrt{d}} \varepsilon^2 \\
    &\stackrel{\text{(c)}}{\le} \sqrt{\frac{d}{2K} \log \frac{1}{\delta}} + \frac{\sqrt{d}}{3K} \log \frac{1}{\delta} + 4 \sqrt{d} \left(2^{\frac{d-2}{d+2}} \log^{\frac{1}{d+2}}(3) K^{-\frac{1}{d+2}} + \frac{1}{2} \sqrt{\frac{1}{K} \log \frac{1}{\delta}}\right)  \\ &\qquad + \frac{8}{3} \sqrt{d} \left(2^{\frac{d-2}{d+2}} \log^{\frac{1}{d+2}}(3) K^{-\frac{1}{d+2}} + \frac{1}{2} \sqrt{\frac{1}{K} \log \frac{1}{\delta}}\right)^2 \\
    &= \sqrt{\frac{d}{2K} \log \frac{1}{\delta}} + \frac{\sqrt{d}}{3K} \log \frac{1}{\delta} + 4 \sqrt{d} \left(2 \left(\frac{\log 3}{2^4}\right)^{\frac{1}{d + 2}} K^{-\frac{1}{d+2}} + \frac{1}{2} \sqrt{\frac{1}{K} \log \frac{1}{\delta}}\right)  \\ &\qquad + \frac{8}{3} \sqrt{d} \left(2 \left(\frac{\log 3}{2^4}\right)^{\frac{1}{d + 2}} K^{-\frac{1}{d+2}} + \frac{1}{2} \sqrt{\frac{1}{K} \log \frac{1}{\delta}}\right)^2 \\
    &\le \sqrt{\frac{d}{2K} \log \frac{1}{\delta}} + \frac{\sqrt{d}}{3K} \log \frac{1}{\delta} + 4 \sqrt{d} \left(2 K^{-\frac{1}{d+2}} + \frac{1}{2} \sqrt{\frac{1}{K} \log \frac{1}{\delta}}\right)  \\ &\qquad + \frac{8}{3} \sqrt{d} \left(2  K^{-\frac{1}{d+2}} + \frac{1}{2} \sqrt{\frac{1}{K} \log \frac{1}{\delta}}\right)^2 \\
    &\le 3 \sqrt{\frac{d}{K} \log \frac{1}{\delta}} + \frac{\sqrt{d}}{K} \log \frac{1}{\delta} + 8 \sqrt{d} K^{-\frac{1}{d+2}} + 11 \sqrt{d} K^{-\frac{2}{d+2}} + 6\sqrt{d \log \frac{1}{\delta}} K^{-\frac{1}{d+2}-\frac{1}{2}}
  \end{align} where (a) follows from $\|f - \tilde{f}\|_\infty \le \varepsilon$ and (b) follows from $\tilde{\boldw}', \hat{\boldw} \in \Delta_{2m - 1}$ (c) follows from the definition of $\varepsilon$, and (d) follows from $d \ge 1$. From the Kantorovich-Rubinstein duality, we have \begin{align}
    W_1(\mu_T^{\mathcal{D}_T}, \mu_T^{\hat{\boldw}}) \le 3 \sqrt{\frac{d}{K} \log \frac{1}{\delta}} + \frac{\sqrt{d}}{K} \log \frac{1}{\delta} + 8 \sqrt{d} K^{-\frac{1}{d+2}} + 11 \sqrt{d} K^{-\frac{2}{d+2}} + 6\sqrt{d \log \frac{1}{\delta}} K^{-\frac{1}{d+2}-\frac{1}{2}}.
  \end{align} Note that we restrected test functions to $f(\frac{\bold1}{2}) = 0$ but this does not affect the bound by adding a bias. As $\hat{\boldw}$ is the optimal solution of the linear program, we have $W_1(\mu_T^{\hat{\boldw}}, \mu_S) = \text{OPT}^{\text{continuous}} \le \text{OPT}^{\text{combinatorial}}$. By the triangle inequality, we have the desired guarantee.
\end{proof}

\end{document}